\DeclareMathOperator{\sgn}{sgn}
\newtheorem{lemma}{Lemma}
\newtheorem{protocol}{Protocol}
\begin{document}
\title{The minimal communication cost for simulating entangled qubits
}

\author{Martin J. Renner,}
    \email{martin.renner@univie.ac.at}
    \orcid{0000-0002-3408-6848}
    \affiliation{University of Vienna, Faculty of Physics, Vienna Center for Quantum Science and Technology (VCQ), Boltzmanngasse 5, 1090 Vienna, Austria}
    \affiliation{Institute for Quantum Optics and Quantum Information (IQOQI), Austrian Academy of Sciences, Boltzmanngasse 3, 1090 Vienna, Austria}

\author{Marco Túlio Quintino}
    \email{Marco.Quintino@lip6.fr}
    \orcid{0000-0003-1332-3477}
    \affiliation{Sorbonne Universit\' {e}, CNRS, LIP6, F-75005 Paris, France}
    \affiliation{Institute for Quantum Optics and Quantum Information (IQOQI), Austrian Academy of Sciences, Boltzmanngasse 3, 1090 Vienna, Austria}
    \affiliation{University of Vienna, Faculty of Physics, Vienna Center for Quantum Science and Technology (VCQ), Boltzmanngasse 5, 1090 Vienna, Austria}


\begin{abstract}
We analyze the amount of classical communication required to reproduce the statistics of local projective measurements on a general pair of entangled qubits,  ${\ket{\Psi_{AB}}=\sqrt{p}\ket{00}+\sqrt{1-p}\ket{11}}$ ($1/2\leq p \leq 1$). We construct a classical protocol that perfectly simulates local projective measurements on all entangled qubit pairs by communicating one classical trit. Additionally, when $\frac{2p(1-p)}{2p-1} \log{\left(\frac{p}{1-p}\right)}+2(1-p)\leq1$,
approximately $0.835 \leq p \leq 1$, we present a classical protocol that requires only a single bit of communication. The latter model even allows a perfect classical simulation with an average communication cost that approaches zero in the limit where the degree of entanglement approaches zero ($p \to 1$). This proves that the communication cost for simulating weakly entangled qubit pairs is strictly smaller than for the maximally entangled one. 
\end{abstract}

\maketitle

\section{Introduction}
Bell's nonlocality theorem \cite{Bell} shows that quantum correlations cannot be reproduced by local hidden variables. This discovery has significantly changed our understanding of quantum theory and correlations allowed by nature. Additionally, Bell nonlocality found application in cryptography~\cite{Ekert91} and opened the possibility for protocols in which security can be certified in a device-independent way~\cite{Acin07,pironio10,Vazirani14,supic19}.

Since quantum correlation cannot be explained by local hidden variables it is interesting to ask which additional resources are required to reproduce them. For instance, can the statistics of local measurements on two entangled qubits be simulated if the local hidden variables are augmented with some classical communication?
%
However, since measurements are described by continuous parameters, one might expect that the communication cost to reproduce these quantum correlations is infinite \cite{Maudlin1992}. After a sequence of improved protocols for entangled qubits \cite{Brassard1999, Steiner2000, cerf2000, Pati2000, massar2001}, a breakthrough was made by Toner and Bacon in 2003 \cite{tonerbacon2003}.
They showed that a single classical bit of communication is sufficient to simulate the statistics of all local projective measurements on a maximally entangled qubit pair. Classical communication has then been established as a natural measure of Bell nonlocality~\cite{Bacon2003, degorre2005, Degorre2007,  Regev2010, Branciard2011, Branciard2012, Maxwell2014, Brassard2015, Brassard2019, Zambrini2019} and found applications in constructing local hidden variable models~\cite{degorre2005}.

For non-maximally entangled qubit pairs, somehow counterintuitively, all known protocols require strictly more resources. In terms of communication, the best-known result is also due to Toner and Bacon~\cite{tonerbacon2003}. They present a protocol for non-maximally entangled qubits, which requires two bits of communication (see Ref.~\cite{Renner2022} for a two-bit protocol that considers general POVM measurements). The asymmetry of partially entangled states and other evidence suggested that simulating weakly entangled states may be harder than simulating maximally entangled ones. For instance, in Ref.~\cite{brunner2005} the authors prove that at least two uses of a PR-box are required for simulating weakly entangled qubit pairs. At the same time, a single use of a PR-box is sufficient for maximally entangled qubits~\cite{cerfprbox}. Additionally, weakly entangled states are strictly more robust than maximally entangled ones when the detection loophole is considered~\cite{Eberhard92,Cabello2007detection,Brunner2007detection,araujo12}.

In this work, we present a protocol that simulates the statistics of arbitrary local projective measurements on weakly entangled qubit pairs with only a single bit of communication. Then, we construct another protocol to simulate local projective measurements on any entangled qubit pair at the cost of a classical trit. 
\begin{figure}
    \centering
    \includegraphics[width=0.45 \textwidth]{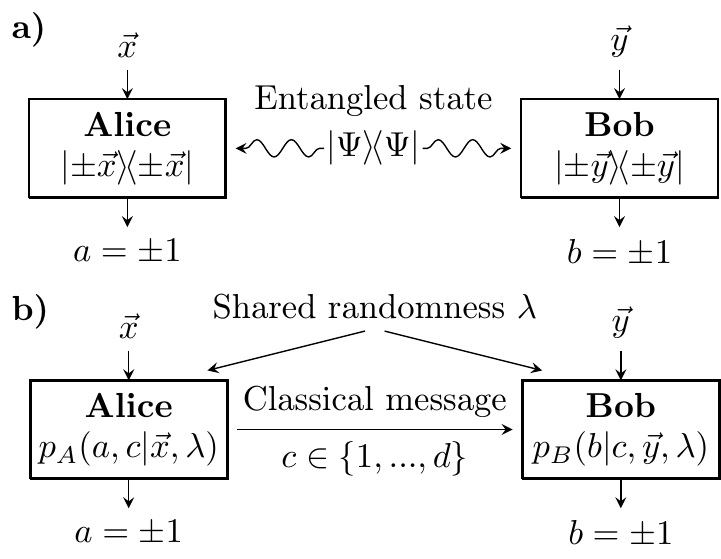}
    \caption{\textbf{a)} Alice and Bob perform local projective measurements on an entangled qubit pair. \textbf{b)} Classical scenario where Alice can send a classical message to Bob.
    }
    \label{fig1}
\end{figure}


\section{The task and introduction of our notation} 
Up to local unitaries, a general entangled qubit pair can be written as
\begin{align}
    \ket{\Psi_{AB}}=\sqrt{p}\ket{00}+\sqrt{1-p}\ket{11} \, , \label{statedef}
\end{align}
where $1/2 \leq p \leq 1$.
At the same time, projective qubit measurements can be identified with a normalized three-dimensional real vector $\vec{x}\in\mathbb{R}^3$, the Bloch vector, $\vec{x}=(x_x,x_y,x_z)$ (with $|\vec{x}|= 1$) via the equation $\ketbra{\vec{x}}=\big(\mathds{1} + \vec{x}\cdot\vec{\sigma}\big)/2$. Here, $\vec{\sigma}=(\sigma_X,\sigma_Y,\sigma_Z)$ are the standard Pauli matrices. In this way, we denote Alice's and Bob's measurement projectors as 
$\ketbra*{\pm\vec{x}}$ and $\ketbra*{\pm \vec{y}}$, which satisfy $\ketbra*{+\vec{x}}+\ketbra*{-\vec{x}}=\ketbra*{+\vec{y}}+\ketbra*{-\vec{y}}=\mathds{1}$.
According to Born's rule, when Alice and Bob apply their measurements on the entangled state $\ket{\Psi_{AB}}$, they output $a,b\in\{-1,+1\}$ according to the statistics:
\begin{align} \small
p_Q(a,b|\vec{x},\vec{y})=\Tr[\ketbra*{a\vec{x}}\otimes \ketbra*{b \vec{y}}\ \ketbra*{\Psi_{AB}}] . \label{prob}
\end{align} \normalsize

In this work, we consider the task of simulating the statistics of Eq.~\eqref{prob} with purely classical resources. More precisely, instead of Alice and Bob performing measurements on the actual quantum state, Alice prepares an output $a$ and a message $c \in \{1,2,...,d\}$ that may depend on her measurement setting $\vec{x}$ and a shared classical variable $\lambda$ that follows a certain probability function $\rho(\lambda)$ (see Fig.~\ref{fig1}~b)). Therefore, we can denote Alice's strategy as $p_A(a,c|\vec{x},\lambda)$. Afterwards, Alice sends the message $c$ to Bob, who produces an outcome $b$ depending on the message $c$ he received from Alice, his measurement setting $\vec{y}$, and the shared variable $\lambda$. In total, we denote his strategy as $p_B(b|c,\vec{y},\lambda)$. We want to remark that in our setting, Alice has no knowledge about Bob's measurement and vice versa. Therefore her strategy cannot depend on $\vec{y}$ and his strategy cannot depend on $\vec{x}$. Altogether, the total probability that Alice and Bob output $a,b\in\{-1,+1\}$ becomes:
\begin{align}
\begin{split}
p_C&(a,b|\vec{x},\vec{y})\\
&=	 \int_\lambda   \text{d}\lambda \; \rho(\lambda) \sum_{c=1}^{d}  p_A(a,c|\vec{x},\lambda) p_B (b|\vec{y},c,\lambda) \, . 
\end{split}
\end{align}
The simulation is successful if, for any choice of projective measurements and any outcome, the classical statistics match the quantum predictions:
\begin{equation}
p_C(a,b|\vec{x},\vec{y})=p_Q(a,b|\vec{x},\vec{y}) \, .
\end{equation}
We want to remark that the roles of Alice and Bob are interchangeable due to the symmetry of the state $\ket{\Psi_{AB}}$ in Eq.~\eqref{statedef}. Therefore, any protocol of this work can be rewritten into a protocol where Bob communicates a message (of the same length) to Alice.

For what follows, we also introduce the Heaviside function, defined by $H(z)=1$ if $z\geq 0$ and $H(z)=0$ if $z<0$, as well as the related functions $\Theta(z):=H(z)\cdot z$ and the sign function $\sgn(z):=H(z)-H(-z)$.

\section{Revisiting known protocols}
Our methods are inspired by the best previously known protocol to simulate general entangled qubit pairs, the so-called "classical teleportation" protocol \cite{cerf2000, tonerbacon2003}. To understand the idea, we first rewrite the quantum probabilities in Eq.~\eqref{prob} by using the rule of conditional probabilities $p(a,b|\vec{x},\vec{y})=p(a|\vec{x},\vec{y})\cdot p(b|\vec{x},\vec{y},a)$. More precisely, we denote with $p_\pm:=
\sum_{b} p(a=\pm 1, b|\vec{x},\vec{y})$ the marginal probabilities of Alice's output that read as follows:
\begin{align}
    p_\pm =\Tr[\ketbra*{\pm\vec{x}}{\pm\vec{x}}\otimes \mathds{1}\ \ketbra*{\Psi_{AB}}{\Psi_{AB}}]\, . \label{pplusminus}
\end{align}
Note that, due to non-signalling, the marginals $p_\pm$ do not depend on $\vec{y}$. At the same time, given Alice's outcome $a=\pm 1$, Bob's qubit collapses into a pure post-measurement state, that we denote here as:
\begin{align}
    \ketbra*{\vec{v}_\pm}&:=\Tr_A[\ketbra*{\pm \vec{x}}\otimes \mathds{1}\ \ketbra*{\Psi_{AB}}]/p_\pm \, . \label{vplusminus}
\end{align}
If now Bob measures his qubit with the projectors $\ketbra{\pm \vec{y}}$, he outputs $b$ according to Born's rule: \begin{align}
    p(b|\vec{x},\vec{y},a)=\Tr [\ketbra{b \vec{y}} \ketbra{\vec{v}_a}]= |\braket{\vec{v}_a}{b\vec{y}}|^2 \, .
\end{align}
With the introduced notation, we can rewrite the quantum probabilities from Eq.~\eqref{prob} into: 
\begin{align}
    p_Q(a,b|\vec{x},\vec{y})=p_{a}\cdot |\braket{\vec{v}_a}{b\vec{y}}|^2 \, . \label{rewrite}
\end{align}
This directly implies a strategy to simulate entangled qubit pairs. Alice outputs $a=\pm 1$ according to her marginals $p_\pm$. Then, given her outcome $a$, she prepares a qubit in the correct post-measurement state $\ketbra{\vec{v}_a}$ and sends it to Bob. Finally, he measures the qubit with his projectors $\ketbra{\pm \vec{y}}$.

However, in a classical simulation, Alice cannot send a physical qubit to Bob. Nevertheless, it is possible to simulate a qubit in that prepare-and-measure (PM) scenario with only two classical bits of communication \cite{tonerbacon2003}. In order to do so, Alice and Bob share four normalized three-dimensional vectors $\vec{\lambda}_1,\vec{\lambda}_2, \vec{\lambda}_3, \vec{\lambda}_4\in S_2$. The first two $\vec{\lambda}_1$ and $\vec{\lambda}_2$ are uniformly and independently distributed on the sphere, whereas $\vec{\lambda}_3=-\vec{\lambda}_1$ and $\vec{\lambda}_4=-\vec{\lambda}_2$. From these four vectors, Alice chooses the one that maximizes $\vec{\lambda}_i\cdot \vec{v}_a$ and communicates the result to Bob. This requires a message with four different symbols ($d=4$), hence, two bits of communication. It turns out that the distribution of the chosen vector becomes $\Theta(\vec{v}_a \cdot \vec{\lambda})/\pi$ (see Appendix~\ref{appendixb} for an independent proof). Finally, Bob takes the chosen vector $\vec{\lambda}$ and outputs $b=\sgn(\vec{y}\cdot \vec{\lambda})$. This precisely reproduces quantum correlations as specified by the following Lemma (see Appendix~\ref{prooflemma} for a proof):
\begin{lemma}\label{lemma1}
Bob receives a vector $\vec{\lambda}\in S_2$ distributed as $\rho(\vec{\lambda})=\Theta(\vec{v}\cdot \vec{\lambda})/\pi$ and outputs $b=\sgn(\vec{y}\cdot \vec{\lambda})$. For every qubit state $\vec{v}\in S_2$ and measurement $\vec{y}\in S_2$ this reproduces quantum correlations:
\begin{align}
    p(b=\pm 1|\vec{y},\vec{v})&=(1 \pm \vec{y}\cdot \vec{v})/2=|\braket{\pm \vec{y}}{\vec{v}}|^2\, .
\end{align}
\end{lemma}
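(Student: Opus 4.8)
The plan is to compute the probability $p(b=+1\mid\vec{y},\vec{v}) = \int_{S_2}\mathrm{d}\vec\lambda\,\rho(\vec\lambda)\,H(\vec{y}\cdot\vec\lambda)$ directly, where $\rho(\vec\lambda)=\Theta(\vec{v}\cdot\vec\lambda)/\pi = H(\vec v\cdot\vec\lambda)\,(\vec v\cdot\vec\lambda)/\pi$. So the quantity to evaluate is
\begin{align}
p(b=+1\mid\vec{y},\vec{v}) = \frac{1}{\pi}\int_{S_2}\mathrm{d}\vec\lambda\; H(\vec v\cdot\vec\lambda)\,H(\vec y\cdot\vec\lambda)\,(\vec v\cdot\vec\lambda)\, .
\end{align}
First I would check that $\rho$ is normalized: with $\vec v$ along the $z$-axis, $\int_{S_2}\Theta(\cos\theta)/\pi\,\mathrm{d}\Omega = \frac{1}{\pi}\int_0^{2\pi}\mathrm{d}\phi\int_0^{\pi/2}\cos\theta\sin\theta\,\mathrm{d}\theta = \frac{2\pi}{\pi}\cdot\frac12 = 1$, so it is a valid distribution. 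The integral only depends on the angle $\gamma$ between $\vec v$ and $\vec y$ (by rotational symmetry about nothing in particular — rather, both vectors can be rotated together), so I would set $\vec v\cdot\vec y = \cos\gamma$ and aim to show the integral equals $(1+\cos\gamma)/2$.

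The cleanest route is to choose coordinates adapted to $\vec v$: put $\vec v = (0,0,1)$ and $\vec y = (\sin\gamma,0,\cos\gamma)$, and write $\vec\lambda = (\sin\theta\cos\phi,\sin\theta\sin\phi,\cos\theta)$. Then $\vec v\cdot\vec\lambda = \cos\theta$ and $\vec y\cdot\vec\lambda = \sin\gamma\sin\theta\cos\phi + \cos\gamma\cos\theta$. The constraint $H(\vec v\cdot\vec\lambda)$ restricts to $\theta\in[0,\pi/2]$, contributing the weight $\cos\theta$; the constraint $H(\vec y\cdot\vec\lambda)$ restricts, for each $\theta$, the range of $\phi$ to the arc where $\sin\gamma\sin\theta\cos\phi \ge -\cos\gamma\cos\theta$, i.e. $\cos\phi \ge -\cot\gamma\cot\theta$. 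The $\phi$-integral of the constant $1$ over this arc gives an arccosine; integrating that against $\cos\theta\sin\theta$ over $\theta\in[0,\pi/2]$ should produce $(1+\cos\gamma)/2$. Alternatively, and perhaps more slickly, I would swap the order of reasoning: the map $\vec\lambda\mapsto b=\sgn(\vec y\cdot\vec\lambda)$ together with the known half-space structure suggests writing the answer as $\frac12 + \frac{1}{2\pi}\int_{S_2}\mathrm{d}\vec\lambda\,\big(H(\vec v\cdot\vec\lambda)(\vec v\cdot\vec\lambda)\big)\sgn(\vec y\cdot\vec\lambda)$ and recognizing the remaining integral — by symmetry it is proportional to $\vec y\cdot\vec v$, and evaluating the proportionality constant at $\gamma=0$ (where it equals $\int_{S_2}\Theta(\vec v\cdot\vec\lambda)/\pi = 1$) fixes it, giving exactly $\vec y\cdot\vec v$.

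The main obstacle is the bookkeeping of the two overlapping half-space constraints in the double integral — in particular, correctly handling the regime $\cot\gamma\cot\theta > 1$ (where the $\phi$-arc is the full circle) versus $\cot\gamma\cot\theta < -1$ (where it is empty), which splits the $\theta$-range into pieces depending on whether $\gamma$ is acute or obtuse. The symmetry argument in the last step sidesteps most of this: once one grants that $\int_{S_2}\mathrm{d}\vec\lambda\,\Theta(\vec v\cdot\vec\lambda)\,\sgn(\vec y\cdot\vec\lambda)$ is a rotationally covariant linear functional of $\vec y$ and hence equals $c\,(\vec v\cdot\vec y)$ for a constant $c$ independent of the angle, one only needs the single easy evaluation at $\vec y=\vec v$. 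Finally, $p(b=-1\mid\vec y,\vec v)=1-p(b=+1\mid\vec y,\vec v)=(1-\vec y\cdot\vec v)/2$ follows immediately, and the identification with $|\braket{\pm\vec y}{\vec v}|^2$ is just the standard Bloch-vector overlap formula $|\braket{\vec m}{\vec n}|^2=(1+\vec m\cdot\vec n)/2$.
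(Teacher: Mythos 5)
Your first route is the same strategy as the paper's proof (a direct integral over $S_2$ in adapted spherical coordinates), but the paper uses a cleverer coordinate choice that you should be aware of: it places \emph{both} $\vec v$ and $\vec y$ in the $xy$-plane (so the polar axis of the spherical coordinates for $\vec\lambda$ is perpendicular to both), which makes the two half-space supports become simple $\phi$-intervals $[0,\pi]$ and $[\beta,\pi+\beta]$ whose intersection is $[\beta,\pi]$ with $\theta$ unrestricted. This avoids entirely the $\arccos$ and case analysis ($|\cot\gamma\cot\theta|\gtrless 1$) that your choice of $\vec v=\hat z$ produces — so if you go the direct-integral route, it is worth switching to the paper's coordinates.

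Your second, ``slicker'' route is a genuinely different argument from the paper's and, once completed, is cleaner. However, as written it has a gap: the integral
\begin{equation}
F(\vec y) \;=\; \int_{S_2}\Theta(\vec v\cdot\vec\lambda)\,\sgn(\vec y\cdot\vec\lambda)\;\mathrm{d}\vec\lambda
\end{equation}
is \emph{not} obviously ``a rotationally covariant linear functional of $\vec y$'' — $\sgn(\vec y\cdot\vec\lambda)$ is a highly nonlinear, degree-zero-homogeneous function of $\vec y$, so calling $F$ linear in $\vec y$ begs the question. Rotational symmetry only tells you $F$ depends on $\cos\gamma=\vec v\cdot\vec y$; it does not tell you that dependence is affine. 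The missing step is to split $\Theta(z)=\tfrac12\bigl(|z|+z\bigr)$: the term $\tfrac12|\vec v\cdot\vec\lambda|$ is \emph{even} in $\vec\lambda$ while $\sgn(\vec y\cdot\vec\lambda)$ is \emph{odd}, so that contribution vanishes upon integrating over $S_2$. What remains is
\begin{equation}
F(\vec y)=\tfrac12\,\vec v\cdot\int_{S_2}\vec\lambda\,\sgn(\vec y\cdot\vec\lambda)\;\mathrm{d}\vec\lambda\,,
\end{equation}
and now the inner vector integral is manifestly covariant under rotations and therefore equal to $c\,\vec y$ for a fixed scalar $c$; setting $\vec y=\vec v$ (where $\Theta\cdot\sgn=\Theta$ almost everywhere) gives $F(\vec v)=\int\Theta(\vec v\cdot\vec\lambda)\,\mathrm{d}\vec\lambda=\pi$, hence $F(\vec y)=\pi\,\vec v\cdot\vec y$ and $p(b=+1)=\tfrac12+\tfrac{1}{2\pi}F=\tfrac12(1+\vec y\cdot\vec v)$. (Note also that, in your write-up, you say the second term ``equals $1$'' at $\gamma=0$, but with the $\tfrac{1}{2\pi}$ prefactor it equals $\tfrac12$; the factor-of-two bookkeeping needs a little care, though it does not affect the final answer.) With the even/odd decomposition supplied, your route~2 becomes a complete proof that avoids the explicit double integral altogether, which is a nice simplification over the paper's computation.
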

That the distribution $\Theta(\vec{\lambda} \cdot \vec{v}_a)/\pi$ serves as a classical description of the qubit state $\ketbra{\vec{v}_a}$ was already observed by Kochen and Specker \cite{KochenSpecker1967}. Later, this Kochen-Specker model was used for the task of simulating qubit correlations, see e.g. Ref.~\cite{gisingisin1999, cerf2000, degorre2005}.

\section{Our approach} 
The previous approach of using the prepare-and-measure scenario to simulate entangled qubits~\cite{cerf2000, tonerbacon2003} has a natural limitation. In fact, simulating a qubit in a PM scenario requires at least two bits of communication~\cite{Renner2022}. However, in this work, we introduce a method that supersedes such a constraint.

The goal for Alice is still to prepare the distribution ${\rho(\vec{\lambda})=\Theta(\vec{v}_a\cdot \vec{\lambda})/\pi}$ to Bob. The improvement here comes from the way to achieve that. In the previous approach, Alice chooses her output first (according to the probabilities $p_\pm$) and then samples the corresponding distribution $\Theta(\vec{\lambda}\cdot \vec{v}_\pm)/\pi$. In our approach, Alice samples first the weighted sum $p_+\ \Theta(\vec{\lambda}\cdot \vec{v}_+)/\pi + p_-\ \Theta(\vec{\lambda}\cdot \vec{v}_-)/\pi$ of these two distributions. Afterwards (Step 3), she chooses her output $a=\pm 1$ in such a way that, conditioned on her output $a$, the resulting distribution of $\vec{\lambda}$ becomes exactly $\Theta(\vec{v}_a \cdot \vec{\lambda})/\pi$. At the same time, the weights $p_\pm$ ensure that Alice outputs according to the correct marginals. More formally, all our simulation protocols fit into the following general framework:
\setcounter{protocol}{-1}
\begin{protocol} 
General framework: \label{protocolgenfram}
\begin{enumerate}
    \item Alice chooses her basis $\vec{x}$ and calculates $p_\pm , \vec{v}_\pm$.
    \item Alice and Bob share two (or three) vectors $\vec{\lambda}_i \in S_2$ according to a certain distribution (specified later). Alice informs Bob to choose one of these vectors such that the resulting distribution of the chosen vector $\vec{\lambda}$ becomes:
    \begin{align}
        \rho_{\vec{x}}(\vec{\lambda}):= p_+\ \Theta(\vec{v}_+ \cdot \vec{\lambda})/\pi +p_-\ \Theta(\vec{v}_- \cdot \vec{\lambda})/\pi \, . \label{defrho}
    \end{align}
    \item Given that $\vec{\lambda}$, Alice outputs $a=\pm 1$ with probability 
        \begin{align}
            p_A(a|\vec{x},\vec{\lambda})=\frac{p_a\ \Theta(\vec{\lambda}\cdot \vec{v}_a)/\pi}{\rho_{\vec{x}}(\vec{\lambda})}  \, . \label{aliceoutput}
        \end{align}
    \item Bob chooses his basis $\vec{y}$ and outputs $b=\sgn(\vec{y}\cdot \vec{\lambda})$.
\end{enumerate}
\end{protocol}
\begin{proof}
    To see that this is sufficient to simulate the correct statistics, we first calculate the total probability that Alice outputs $a=\pm 1$ in Step 3:
\begin{align}
    p_A(a|\vec{x})&=\int_{S_2} p_A(a|\vec{x},\vec{\lambda})\cdot \rho_{\vec{x}}(\vec{\lambda}) \, \mathrm{d}\vec{\lambda}\\
    &=\int_{S_2}  p_a\ \Theta(\vec{\lambda}\cdot \vec{v}_\pm)/\pi \, \mathrm{d}\vec{\lambda} =p_a \, . \label{marginals}
\end{align}
For the last step, see Eq.~\eqref{normalization} in App.~\ref{prooflemma}. Now we can show that, given Alice outputs $a=\pm 1$, the conditional distribution of the resulting vector $\vec{\lambda}$ is:
\begin{align}
    \rho_{\vec{x}}(\vec{\lambda}|a)=\frac{p_A(a|\vec{x},\vec{\lambda})\cdot \rho_{\vec{x}}(\vec{\lambda})}{p_A(a|\vec{x})}=\frac{1}{\pi}\  \Theta(\vec{\lambda}\cdot \vec{v}_a) \, .
\end{align}
As in the previous approach, Lemma~\ref{lemma1} ensures that Bob outputs $b$ in Step 4 according to $p(b|\vec{x},\vec{y},a)= |\braket{\vec{v}_a}{b\vec{y}}|^2$. All together, the total probability of this procedure becomes $p_C(a,b|\vec{x},\vec{y})=p_a\cdot p(b|\vec{x},\vec{y},a)=p_a\cdot |\braket{\vec{v}_a}{b\vec{y}}|^2$. This equals $p_Q(a,b|\vec{x},\vec{y})$ as given in Eq.~\eqref{rewrite}.
\end{proof}

Hence, the amount of communication to simulate a qubit pair reduces to an efficient way to sample the distributions $\rho_{\vec{x}}(\vec{\lambda})$. Clearly, the ability to sample each term $\Theta(\vec{\lambda}\cdot \vec{v}_\pm)/\pi$ individually (as in the previous approach~\cite{cerf2000, tonerbacon2003}) implies the possibility to sample the weighted sum of these two terms $\rho_{\vec{x}}(\vec{\lambda})$. However, in general, this is not necessary, and we find more efficient ways to do that. The improvement comes from the fact, that the two post-measurement states are not independent of each other but satisfy the following relation:
\begin{align}
    p_+\ketbra*{\vec{v}_+}+p_-\ketbra*{\vec{v}_-}=\Tr_A[\ketbra*{\Psi_{AB}}] \, .
\end{align}
This follows directly from Eq.~\eqref{vplusminus} and $\ketbra*{+\vec{x}}+\ketbra*{-\vec{x}}=\mathds{1}$. In the Bloch vector representation, this equation becomes:
\begin{align}
    p_+\ \vec{v}_++p_-\ \vec{v}_-=(2p-1)\ \vec{z} \, , \label{nonsignalling}
\end{align}
where we define $\vec{z}:=(0,0,1)^T$. For instance, if the state is local ($p=1$), the two post-measurement states are always $\vec{v}_\pm=\vec{z}$, independent of Alice's measurement $\vec{x}$. In that case, the distributions $\rho_{\vec{x}}(\vec{\lambda})\equiv \Theta(\vec{\lambda}\cdot \vec{z})/\pi$ in Eq.~\eqref{defrho} are constant and do not require any communication to be implemented. If the state is weakly entangled  ($p\lesssim 1$), one post-measurement state $\vec{v}_a$ is still very close to the vector $\vec{z}$. In this way, it turns out that, for every $\vec{x}$, the distribution $\rho_{\vec{x}}(\vec{\lambda})$ is dominated by a constant part proportional to $\Theta(\vec{\lambda}\cdot \vec{z})/\pi$. More formally, we can define
\begin{align}
    \tilde{\rho}_{\vec{x}}(\vec{\lambda}):=&\rho_{\vec{x}}(\vec{\lambda})-\frac{(2p-1)}{\pi}\ \Theta(\vec{\lambda}\cdot \vec{z}) \, .
\end{align}
In Appendix~\ref{rhotilde}, we prove the following properties of that distribution and give an illustration of them (see Fig.~\ref{fig2}).

\begin{lemma}\label{lemma2}
The distribution $\tilde{\rho}_{\vec{x}}(\vec{\lambda})$ defined above is positive, $\tilde{\rho}_{\vec{x}}(\vec{\lambda})\geq 0$ and  sub-normalized, $\int_{S_2} \tilde{\rho}_{\vec{x}}(\vec{\lambda}) \ \mathrm{d}\vec{\lambda}=2(1-p)$. Additionally, it respects the two upper bounds, $\tilde{\rho}_{\vec{x}}(\vec{\lambda})\leq \frac{\sqrt{p(1-p)}}{\pi}$  and $\tilde{\rho}_{\vec{x}}(\vec{\lambda})\leq \frac{p_\pm}{\pi} |\vec{\lambda}\cdot \vec{v}_\pm|$.
\end{lemma}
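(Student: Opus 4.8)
The plan is to recast $\tilde{\rho}_{\vec{x}}(\vec{\lambda})$ in a transparent form using $\Theta(t)=\tfrac12(t+|t|)$ together with the positive homogeneity of $\Theta$. Setting $A:=p_+\,(\vec{v}_+\cdot\vec{\lambda})$ and $B:=p_-\,(\vec{v}_-\cdot\vec{\lambda})$, we have $\pi\rho_{\vec{x}}(\vec{\lambda})=\Theta(A)+\Theta(B)=\tfrac12(A+B)+\tfrac12(|A|+|B|)$, and by the non-signalling identity~\eqref{nonsignalling} the linear part equals $\tfrac12(A+B)=\tfrac12(2p-1)\lambda_z$, which is exactly the linear part of $(2p-1)\Theta(\lambda_z)$. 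Hence these cancel in $\pi\tilde{\rho}_{\vec{x}}(\vec{\lambda})$ and, since $|A+B|=(2p-1)|\lambda_z|$,
\begin{align}
\pi\,\tilde{\rho}_{\vec{x}}(\vec{\lambda})=\tfrac12\big(|A|+|B|-|A+B|\big).\label{planrewrite}
\end{align}
Positivity is then immediate from the triangle inequality $|A+B|\leq|A|+|B|$.

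For the normalization I would integrate the definition term by term: $\int_{S_2}\rho_{\vec{x}}(\vec{\lambda})\,\mathrm{d}\vec{\lambda}=p_++p_-=1$ and $\int_{S_2}\tfrac1\pi\Theta(\vec{z}\cdot\vec{\lambda})\,\mathrm{d}\vec{\lambda}=1$ by Eq.~\eqref{normalization}, so $\int_{S_2}\tilde{\rho}_{\vec{x}}(\vec{\lambda})\,\mathrm{d}\vec{\lambda}=1-(2p-1)=2(1-p)$. For the bound $\tilde{\rho}_{\vec{x}}(\vec{\lambda})\leq\tfrac{p_\pm}{\pi}|\vec{\lambda}\cdot\vec{v}_\pm|$ I would again use~\eqref{planrewrite}: the triangle inequality in the form $|A+B|\geq|A|-|B|$ gives $\pi\tilde{\rho}_{\vec{x}}(\vec{\lambda})\leq\tfrac12(|A|+|B|-|A|+|B|)=|B|=p_-|\vec{\lambda}\cdot\vec{v}_-|$, and symmetrically $|A+B|\geq|B|-|A|$ yields $\pi\tilde{\rho}_{\vec{x}}(\vec{\lambda})\leq|A|=p_+|\vec{\lambda}\cdot\vec{v}_+|$.

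The main work is the bound $\tilde{\rho}_{\vec{x}}(\vec{\lambda})\leq\tfrac{\sqrt{p(1-p)}}{\pi}$, for which I need the explicit post-measurement data. A direct computation from~\eqref{pplusminus} and~\eqref{vplusminus}, using completeness of Alice's projectors and expanding in the Pauli basis, gives $p_\pm=\tfrac12\big(1\pm(2p-1)x_z\big)$ and, for the difference vector $\vec{m}:=p_+\vec{v}_+-p_-\vec{v}_-$, the values $\vec{m}\cdot\vec{z}=x_z$ and $|\vec{m}|^2=4p(1-p)+(2p-1)^2x_z^2$; in particular the component $\vec{m}_\perp$ of $\vec{m}$ orthogonal to $\vec{z}$ satisfies $|\vec{m}_\perp|^2=4p(1-p)(1-x_z^2)$. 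Since $|A|+|B|=\max\big(|A+B|,|A-B|\big)$ for real numbers and $A-B=\vec{m}\cdot\vec{\lambda}$, identity~\eqref{planrewrite} becomes
\begin{align}
\pi\,\tilde{\rho}_{\vec{x}}(\vec{\lambda})=\tfrac12\,\Theta\big(|\vec{m}\cdot\vec{\lambda}|-(2p-1)|\lambda_z|\big),\label{plantheta}
\end{align}
so it suffices to prove $\max_{\vec{\lambda}\in S_2}\big(|\vec{m}\cdot\vec{\lambda}|-(2p-1)|\lambda_z|\big)\leq 2\sqrt{p(1-p)}$.

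To close this, I would split $\vec{\lambda}$ into $\lambda_z\vec{z}$ and its orthogonal part and apply Cauchy--Schwarz, $|\vec{m}\cdot\vec{\lambda}|-(2p-1)|\lambda_z|\leq|\vec{m}_\perp|\sqrt{1-\lambda_z^2}+\big(|x_z|-(2p-1)\big)|\lambda_z|$, whose maximum over $|\lambda_z|\in[0,1]$ is at most $\sqrt{|\vec{m}_\perp|^2+\max(|x_z|-(2p-1),0)^2}$ (a second Cauchy--Schwarz when the coefficient of $|\lambda_z|$ is positive, and the value at $\lambda_z=0$ otherwise). Inserting $|\vec{m}_\perp|^2=4p(1-p)(1-x_z^2)$ reduces the whole claim to the elementary inequality $\max(|x_z|-(2p-1),0)^2\leq 4p(1-p)x_z^2$: this is trivial for $|x_z|\leq 2p-1$, and for $|x_z|\geq 2p-1$ it rearranges (with $q:=2p-1$, $u:=|x_z|$, $4p(1-p)=1-q^2$) to $q(qu^2-2u+q)\leq0$, which holds since $qu^2-2u+q\leq qu^2-2u+u=u(qu-1)\leq0$. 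I expect this last, tight estimate to be the main obstacle: the cruder bound $\tilde{\rho}_{\vec{x}}(\vec{\lambda})\leq\tfrac1\pi\min(p_+,p_-)$ that already follows from the $|\vec{\lambda}\cdot\vec{v}_\pm|$-bound is not sufficient (it fails at $x_z=0$, where $\min(p_+,p_-)=\tfrac12>\sqrt{p(1-p)}$ whenever $p>\tfrac12$), so one genuinely has to use that a single unit vector $\vec{\lambda}$ controls both the $|\vec{m}\cdot\vec{\lambda}|$ term and the subtracted $(2p-1)|\lambda_z|$ term.
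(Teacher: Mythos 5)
Your argument is correct and takes a genuinely different, more algebraic route than the paper's proof. Both you and the paper compute the normalization by linearity using Eq.~\eqref{normalization}. For positivity and the bound $\tilde{\rho}_{\vec{x}}(\vec{\lambda})\leq \frac{p_\pm}{\pi}|\vec{\lambda}\cdot\vec{v}_\pm|$, the paper repeatedly applies the subadditivity $\Theta(a)+\Theta(b)\geq\Theta(a+b)$ with cleverly chosen $a,b$; your rewriting $\pi\tilde{\rho}_{\vec{x}}(\vec{\lambda})=\tfrac{1}{2}\big(|A|+|B|-|A+B|\big)$ via $\Theta(t)=\tfrac{1}{2}(t+|t|)$ condenses those two steps into a single application of the triangle inequality and its reverse, which is cleaner. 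The main divergence is the bound $\tilde{\rho}_{\vec{x}}(\vec{\lambda})\leq\sqrt{p(1-p)}/\pi$: the paper first establishes the stronger ``2nd bound'' $\tilde{\rho}_{\vec{x}}(\vec{\lambda})\leq\frac{1}{2\pi}\frac{1-C^2}{\sqrt{1-C^2\sin^2(\theta)}+C|\cos(\theta)|}$ by a calculus optimization of $\tilde{\rho}_{\vec{x}}(\vec{\lambda})$ over all pairs $(p_+,\vec{v}_+)$ compatible with the single constraint~\eqref{nonsignalling}, and then maximizes over $\theta$. You instead compute the explicit Bloch data of the post-measurement ensemble, namely $p_\pm = \tfrac{1}{2}(1\pm(2p-1)x_z)$ and $\vec{m}=p_+\vec{v}_+-p_-\vec{v}_-$, use the identity $|A|+|B|=\max(|A+B|,|A-B|)$ to obtain the closed form $\pi\tilde{\rho}_{\vec{x}}(\vec{\lambda})=\tfrac{1}{2}\Theta\big(|\vec{m}\cdot\vec{\lambda}|-(2p-1)|\lambda_z|\big)$, and finish with Cauchy--Schwarz plus an elementary quadratic estimate. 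Your route is calculus-free but relies on the specific state data; the paper's worst-case optimization uses only the non-signalling constraint and as a by-product yields the sharper $\theta$-dependent 2nd bound, which the paper needs later (Appendix~\ref{improvedprot}) to define $\tilde{\rho}_{max}$ and the function $N(p)$ for the improved one-bit protocol. Your proof therefore covers exactly the statement of Lemma~\ref{lemma2}, but does not reproduce that auxiliary 2nd bound.
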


\section{One bit protocol for weakly entangled states} 
In particular, when the state is weakly entangled, the extra term $\tilde{\rho}_{\vec{x}}(\vec{\lambda})$ remains small. This allows us to find the
following protocol for the range $p\geq 1/2+\sqrt{3}/4 \approx 0.933$.
\begin{protocol}[$1/2+\sqrt{3}/4\leq p\leq 1$, 1 bit] \label{protocol1}
Same as Protocol~\ref{protocolgenfram} with the following 2. Step:\\
Alice and Bob share two normalized three-dimensional vectors $\vec{\lambda}_1, \vec{\lambda}_2 \in S_2$ according to the distribution:
    \begin{align}
    \rho(\vec{\lambda}_1)=\frac{1}{4\pi}\, ,&&
    \rho(\vec{\lambda}_2)=\frac{1}{\pi}\ \Theta(\vec{\lambda}_2 \cdot \vec{z}) \, .
\end{align}
    Alice sets $c=1$ with probability:
    \begin{align}
    \begin{split}
        p_A(c=1|\vec{x},\vec{\lambda}_1)=(4\pi)\cdot \tilde{\rho}_{\vec{x}} (\vec{\lambda}_1)
    \end{split}
    \end{align}
    and otherwise she sets $c=2$. She communicates the bit $c$ to Bob. Both set $\vec{\lambda}:=\vec{\lambda}_c$ and reject the other vector.
\end{protocol}
\begin{proof}
Whenever Alice chooses the first vector, the resulting distribution of the chosen vector is precisely $p_A(c=1|\vec{x},\vec{\lambda}_1)\cdot \rho(\vec{\lambda}_1)=\tilde{\rho}_{\vec{x}}(\vec{\lambda}_1)$. The total probability that she chooses the first vector is $\int_{S_2} p_A(c=1|\vec{x},\vec{\lambda}_1)\cdot \rho(\vec{\lambda}_1) \, \mathrm{d}\vec{\lambda}_1=\int_{S_2} \tilde{\rho}_{\vec{x}}(\vec{\lambda}_1) \, \mathrm{d}\vec{\lambda}_1 = 2(1-p)$ (see Lemma~\ref{lemma2}). In all the remaining cases, with total probability $2p-1$, she chooses vector $\vec{\lambda}_2$, distributed as $\Theta(\vec{\lambda}_2 \cdot \vec{z})/\pi$. Therefore, the total distribution of the chosen vector $\vec{\lambda}:=\vec{\lambda}_c$ becomes the desired distribution
\begin{align}
    \tilde{\rho}_{\vec{x}}(\vec{\lambda})+\frac{(2p-1)}{\pi}\ \Theta(\vec{\lambda}\cdot \vec{z})=\rho_{\vec{x}}(\vec{\lambda}) \, .
\end{align}
In order for the protocol to be well defined, it has to hold that $0\leq p(c=1|\vec{x},\vec{\lambda}_1)\leq 1$, hence $0\leq \tilde{\rho}_{\vec{x}} (\vec{\lambda}_1)\leq 1/(4\pi)$. As a consequence of Lemma~\ref{lemma2}, this is true whenever $1/2+\sqrt{3}/4 \leq p\leq 1$.
\end{proof}

\begin{figure}
    \centering
    \includegraphics[width=0.45 \textwidth]{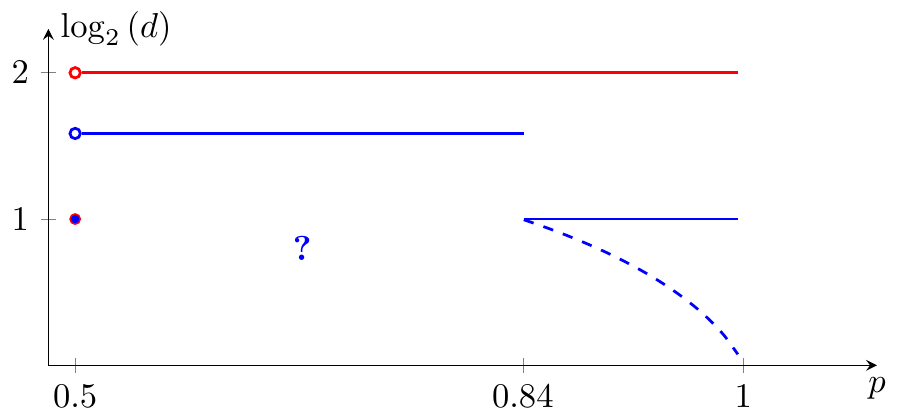}
\caption{
Length of the classical message $d$ required to simulate a qubit pair ${\ket{\Psi_{AB}}=\sqrt{p}\ket{00}+\sqrt{1-p}\ket{11}}$ as a function in $p$. The previous best result, from Toner and Bacon~\cite{tonerbacon2003}, is presented in red. Our novel results are presented in blue. The dashed curve in blue represents the fraction of rounds where Alice needs to send a bit to Bob. The main open question is whether a single bit is sufficient for simulating qubit pairs with $1/2<p<0.84$.
}
    \label{figcomparison}
\end{figure}
Clearly, the simulation of weakly entangled states requires some communication since all pure entangled quantum states violate a Bell inequality \cite{Gisin1991}. It is surprising that the minimal amount of information (1 bit) is already sufficient to reproduce the correlations for all projective measurements. However, we can even improve that protocol. In Appendix~\ref{improvedprot} (Protocol~\ref{improvedonebitprotocol}), we show how to simulate every weakly entangled state with $0.835 \leq p \leq 1$ by communicating only a single bit. Moreover, it turns out that this bit is not necessary in each round. In fact, Alice sends the bit in only a ratio of $N(p)$ of the rounds, where
\begin{align}
    N(p):=\frac{2p(1-p)}{2p-1} \log{\left(\frac{p}{1-p}\right)}+2(1-p) \, .
\end{align}
In the remaining rounds, with probability $1-N(p)$, they do not communicate with each other. In the limit where $p$ approaches one, the function $N(p)$ approaches zero. Hence, if the state is very weakly entangled, a perfect simulation is possible even though they communicate a single bit only in a small fraction of rounds (see dashed curve in Fig.~\ref{figcomparison}). It is known that a simulation of a maximally entangled state without communication in some fraction of rounds is impossible. This would contradict the fact that the singlet has no local part \cite{elitzur1992, barrett2006}. Hence, our result shows that simulating weakly entangled states requires strictly fewer communication resources than simulating a maximally entangled one. Interestingly, we can also use our approach to quantify the local content of any pure entangled two-qubit state (see Appendix~\ref{applocalcontent}). More precisely, we maximize the fraction of rounds in which no communication is necessary. This provides an independent proof of a result by Portmann et al.~\cite{Portmann2012localpart}.

\section{Trit protocol for arbitrary entangled pairs}
It is worth mentioning, that we also recover a one-bit protocol for simulating the maximally entangled state ($p=1/2$) in our framework. In that case, there is another geometric argument that allows to sample the distributions $\rho_{\vec{x}}(\vec{\lambda})$ efficiently. More precisely, the two post-measurement states are always opposite of each other $\vec{v}_-=-\vec{v}_+$ and it holds that $p_+=p_-=1/2$. In this way, the distribution $\rho_{\vec{x}}(\vec{\lambda})$ turns out to be $\rho_{\vec{x}}(\vec{\lambda})=|\vec{\lambda}\cdot \vec{v}_+|/(2\pi)$. It was already observed, by Degorre et al.~\cite{degorre2005}, that this distribution can be sampled by communicating only a single bit of communication (see Appendix~\ref{appendixb} for details and an independent proof). Here, we connect this with the techniques developed for Protocol~~\ref{protocol1} to present a protocol that simulates all entangled qubit pairs by communicating a classical trit.

\begin{protocol}[$1/2\leq p\leq 1$, 1 trit]
Same as Protocol~\ref{protocolgenfram} with the following 2. Step:\\
Alice and Bob share three normalized three-dimensional vectors $\vec{\lambda}_1,\vec{\lambda}_2,\vec{\lambda}_3 \in S_2$ according to the following distribution:
\begin{align}
    \rho(\vec{\lambda}_1)=\frac{1}{4\pi}\, ,&&
    \rho(\vec{\lambda}_2)=\frac{1}{4\pi}\, ,&&
    \rho(\vec{\lambda}_3)=\frac{1}{\pi}\ \Theta(\vec{\lambda}_3 \cdot \vec{z})\, .
\end{align}
If $p_+\leq 0.5$ she sets $\vec{v}:=\vec{v}_+$, otherwise she sets $\vec{v}:=\vec{v}_-$. Afterwards, she sets $c=1$ if $|\vec{v}\cdot \vec{\lambda}_1|\geq |\vec{v}\cdot \vec{\lambda}_2|$ and $c=2$ otherwise. Finally, with probability
\begin{align}
    \begin{split}
    p_A(t=c|\vec{x},\vec{\lambda}_c)=\frac{\tilde{\rho}_{\vec{x}}(\vec{\lambda}_c)}{\frac{1}{2\pi} |\vec{\lambda}_c \cdot \vec{v}|}
    \end{split}
\end{align}
she sets $t=c$ and otherwise, she sets $t=3$. She communicates the trit $t$ to Bob. Both set $\vec{\lambda}:=\vec{\lambda}_t$ and reject the other two vectors.
\end{protocol}

\begin{proof}
We show that the distribution of the shared vector $\vec{\lambda}$ becomes exactly the required $\rho_{\vec{x}}(\vec{\lambda})$. Consider the step before Alice sets $t=c$ or $t=3$. As a result of Ref.~\cite{degorre2005} (see Protocol~\ref{protocoldegorre} in Appendix~\ref{appendixb} for details), the distribution of the vector $\vec{\lambda}_c$ is $\rho(\vec{\lambda}_c)=\frac{1}{2\pi}|\vec{\lambda}_c\cdot \vec{v}|$. Now we use a similar idea as in the protocol for weakly entangled states. Whenever she sets $t=c$, the resulting distribution of the chosen vector is precisely $p_A(t=c|\vec{x},\vec{\lambda}_c)\cdot \rho(\vec{\lambda}_c)=\tilde{\rho}_{\vec{x}}(\vec{\lambda}_c)$. The total probability that she sets $t=c$ is $\int_{S_2} p_A(t=c|\vec{x},\vec{\lambda}_c)\cdot \rho(\vec{\lambda}_c) \, \mathrm{d}\vec{\lambda}_c=\int_{S_2} \tilde{\rho}_{\vec{x}}(\vec{\lambda}_c) \, \mathrm{d}\vec{\lambda}_c = 2(1-p)$ (see Lemma~\ref{lemma2}). In all the remaining cases, with total probability $2p-1$, she chooses vector $\vec{\lambda}_3$, distributed as $\Theta(\vec{\lambda}_3 \cdot \vec{z})/\pi$. Therefore, the total distribution of the chosen vector $\vec{\lambda}:=\vec{\lambda}_t$ becomes the desired distribution
\begin{align}
    \tilde{\rho}_{\vec{x}}(\vec{\lambda})+\frac{(2p-1)}{\pi}\ \Theta(\vec{\lambda}\cdot \vec{z})=\rho_{\vec{x}}(\vec{\lambda}) \, .
\end{align} 
The fact that $0\leq p_A(t=c|\vec{x},\vec{\lambda}_c) \leq 1$ follows from $0\leq \tilde{\rho}_{\vec{x}}(\vec{\lambda})\leq \frac{p_\pm}{\pi} |\vec{\lambda}\cdot \vec{v}_\pm|$ in Lemma~\ref{lemma2}.
\end{proof}

\section{Discussion} 
To conclude, we showed that a classical trit is enough for simulating the outcomes of local projective measurements on any entangled qubit pair. For weakly entangled states, we proved that already a single bit is sufficient. In the latter case, Alice does not need to send the bit in all the rounds, which is impossible for a maximally entangled state \cite{elitzur1992, barrett2006}. In this way, we show that simulating weakly entangled states is strictly simpler than simulating maximally entangled ones.

The main open question now is whether a single bit is sufficient to simulate every entangled qubit pair, see Fig.~\ref{figcomparison}. Recently, numerical evidence has been reported by Sidajaya et al.~\cite{sidajaya23} that a single bit is indeed enough. However, an analytical model is still missing. We remark that our framework is in principle capable of providing such a model. The challenge becomes to find, for each qubit pair, a distribution of two shared random vectors, 
such that Alice can sample $\rho_{\vec{x}}(\vec{\lambda})$ for every measurement basis $\vec{x}$. 
In all protocols considered here, the shared vectors are independent of each other, i.e., $\rho(\vec{\lambda}_1,\vec{\lambda}_2)=\rho(\vec{\lambda}_1)\cdot \rho(\vec{\lambda}_2)$. Dropping this constraint may be a way to extend our approach and may lead to a complete solution to this longstanding open question \cite{Gisinreview2018,Brassard2003, Brunner_2014}. 


\begin{acknowledgments}
We thank \v{C}aslav Brukner, Valerio Scarani,  Peter Sidajaya, Armin Tavakoli, Isadora Veeren, and Bai Chu Yu for fruitful discussions. Furthermore, we thank Nicolas Brunner and Nicolas Gisin for pointing out Ref.~\cite{Portmann2012localpart} to us.

This research was funded in whole, or in part, by the Austrian
Science Fund (FWF) through BeyondC (F7103). For the purpose of open access, the author has applied a CC BY public copyright license to any Author Accepted Manuscript version arising from this submission.
%

This project has received funding from the European Union’s Horizon 2020 research and innovation programme under the Marie Skłodowska-Curie grant agreement No 801110. It reflects only the authors' view, the EU Agency is not responsible for any use that may be made of the information it contains. ESQ has received funding from the Austrian Federal Ministry of Education, Science and Research (BMBWF).
\end{acknowledgments}


\nocite{apsrev42Control} 
\bibliographystyle{0_MTQ_apsrev4-2_corrected}
\bibliography{bib.bib}

\begin{thebibliography}{42}%
\makeatletter
\providecommand \@ifxundefined [1]{%
 \@ifx{#1\undefined}
}%
\providecommand \@ifnum [1]{%
 \ifnum #1\expandafter \@firstoftwo
 \else \expandafter \@secondoftwo
 \fi
}%
\providecommand \@ifx [1]{%
 \ifx #1\expandafter \@firstoftwo
 \else \expandafter \@secondoftwo
 \fi
}%
\providecommand \natexlab [1]{#1}%
\providecommand \enquote  [1]{``#1''}%
\providecommand \bibnamefont  [1]{#1}%
\providecommand \bibfnamefont [1]{#1}%
\providecommand \citenamefont [1]{#1}%
\providecommand \href@noop [0]{\@secondoftwo}%
\providecommand \href [0]{\begingroup \@sanitize@url \@href}%
\providecommand \@href[1]{\@@startlink{#1}\@@href}%
\providecommand \@@href[1]{\endgroup#1\@@endlink}%
\providecommand \@sanitize@url [0]{\catcode `\\12\catcode `\$12\catcode `\&12\catcode `\#12\catcode `\^12\catcode `\_12\catcode `\%12\relax}%
\providecommand \@@startlink[1]{}%
\providecommand \@@endlink[0]{}%
\providecommand \url  [0]{\begingroup\@sanitize@url \@url }%
\providecommand \@url [1]{\endgroup\@href {#1}{\urlprefix }}%
\providecommand \urlprefix  [0]{URL }%
\providecommand \Eprint [0]{\href }%
\providecommand \doibase [0]{https://doi.org/}%
\providecommand \selectlanguage [0]{\@gobble}%
\providecommand \bibinfo  [0]{\@secondoftwo}%
\providecommand \bibfield  [0]{\@secondoftwo}%
\providecommand \translation [1]{[#1]}%
\providecommand \BibitemOpen [0]{}%
\providecommand \bibitemStop [0]{}%
\providecommand \bibitemNoStop [0]{.\EOS\space}%
\providecommand \EOS [0]{\spacefactor3000\relax}%
\providecommand \BibitemShut  [1]{\csname bibitem#1\endcsname}%
\let\auto@bib@innerbib\@empty
\bibitem [{\citenamefont {Bell}(1964)}]{Bell}%
  \BibitemOpen
  \bibfield  {author} {\bibinfo {author} {\bibfnamefont {J.~S.}\ \bibnamefont {Bell}},\ }\bibfield  {title} {\bibinfo {title} {On the einstein podolsky rosen paradox},\ }\href {https://doi.org/10.1103/PhysicsPhysiqueFizika.1.195} {\bibfield  {journal} {\bibinfo  {journal} {Physics Physique Fizika}\ }\textbf {\bibinfo {volume} {1}},\ \bibinfo {pages} {195--200} (\bibinfo {year} {1964})}\BibitemShut {NoStop}%
\bibitem [{\citenamefont {Ekert}(1991)}]{Ekert91}%
  \BibitemOpen
  \bibfield  {author} {\bibinfo {author} {\bibfnamefont {A.~K.}\ \bibnamefont {Ekert}},\ }\bibfield  {title} {\bibinfo {title} {Quantum cryptography based on bell's theorem},\ }\href {https://doi.org/10.1103/PhysRevLett.67.661} {\bibfield  {journal} {\bibinfo  {journal} {Phys. Rev. Lett.}\ }\textbf {\bibinfo {volume} {67}},\ \bibinfo {pages} {661--663} (\bibinfo {year} {1991})}\BibitemShut {NoStop}%
\bibitem [{\citenamefont {{Ac{\'\i}n}}\ \emph {et~al.}(2007)\citenamefont {{Ac{\'\i}n}}, \citenamefont {{Brunner}}, \citenamefont {{Gisin}}, \citenamefont {{Massar}}, \citenamefont {{Pironio}},\ and\ \citenamefont {{Scarani}}}]{Acin07}%
  \BibitemOpen
  \bibfield  {author} {\bibinfo {author} {\bibfnamefont {A.}~\bibnamefont {{Ac{\'\i}n}}}, \bibinfo {author} {\bibfnamefont {N.}~\bibnamefont {{Brunner}}}, \bibinfo {author} {\bibfnamefont {N.}~\bibnamefont {{Gisin}}}, \bibinfo {author} {\bibfnamefont {S.}~\bibnamefont {{Massar}}}, \bibinfo {author} {\bibfnamefont {S.}~\bibnamefont {{Pironio}}},\ and\ \bibinfo {author} {\bibfnamefont {V.}~\bibnamefont {{Scarani}}},\ }\bibfield  {title} {\bibinfo {title} {{Device-Independent Security of Quantum Cryptography against Collective Attacks}},\ }\href {https://doi.org/10.1103/PhysRevLett.98.230501} {\bibfield  {journal} {\bibinfo  {journal} {Phys. Rev. Lett.}\ }\textbf {\bibinfo {volume} {98}},\ \bibinfo {eid} {230501} (\bibinfo {year} {2007})},\ \Eprint {https://arxiv.org/abs/quant-ph/0702152}{arXiv:quant-ph/0702152 [quant-ph]}\BibitemShut {NoStop}%
\bibitem [{\citenamefont {{Pironio}}\ \emph {et~al.}(2010)\citenamefont {{Pironio}}, \citenamefont {{Ac{\'\i}n}}, \citenamefont {{Massar}}, \citenamefont {{de La Giroday}}, \citenamefont {{Matsukevich}}, \citenamefont {{Maunz}}, \citenamefont {{Olmschenk}}, \citenamefont {{Hayes}}, \citenamefont {{Luo}}, \citenamefont {{Manning}},\ and\ \citenamefont {{Monroe}}}]{pironio10}%
  \BibitemOpen
  \bibfield  {author} {\bibinfo {author} {\bibfnamefont {S.}~\bibnamefont {{Pironio}}}, \bibinfo {author} {\bibfnamefont {A.}~\bibnamefont {{Ac{\'\i}n}}}, \bibinfo {author} {\bibfnamefont {S.}~\bibnamefont {{Massar}}}, \bibinfo {author} {\bibfnamefont {A.~B.}\ \bibnamefont {{de La Giroday}}}, \bibinfo {author} {\bibfnamefont {D.~N.}\ \bibnamefont {{Matsukevich}}}, \bibinfo {author} {\bibfnamefont {P.}~\bibnamefont {{Maunz}}}, \bibinfo {author} {\bibfnamefont {S.}~\bibnamefont {{Olmschenk}}}, \bibinfo {author} {\bibfnamefont {D.}~\bibnamefont {{Hayes}}}, \bibinfo {author} {\bibfnamefont {L.}~\bibnamefont {{Luo}}}, \bibinfo {author} {\bibfnamefont {T.~A.}\ \bibnamefont {{Manning}}},\ and\ \bibinfo {author} {\bibfnamefont {C.}~\bibnamefont {{Monroe}}},\ }\bibfield  {title} {\bibinfo {title} {{Random numbers certified by Bell{\textquoteright}s theorem}},\ }\href {https://doi.org/10.1038/nature09008} {\bibfield  {journal} {\bibinfo  {journal} {Nature}\ }\textbf {\bibinfo {volume} {464}},\ \bibinfo {pages}
  {1021--1024} (\bibinfo {year} {2010})},\ \Eprint {https://arxiv.org/abs/0911.3427}{arXiv:0911.3427 [quant-ph]}\BibitemShut {NoStop}%
\bibitem [{\citenamefont {Vazirani}\ and\ \citenamefont {Vidick}(2014)}]{Vazirani14}%
  \BibitemOpen
  \bibfield  {author} {\bibinfo {author} {\bibfnamefont {U.}~\bibnamefont {Vazirani}}\ and\ \bibinfo {author} {\bibfnamefont {T.}~\bibnamefont {Vidick}},\ }\bibfield  {title} {\bibinfo {title} {Fully device-independent quantum key distribution},\ }\href {https://doi.org/10.1103/PhysRevLett.113.140501} {\bibfield  {journal} {\bibinfo  {journal} {Phys. Rev. Lett.}\ }\textbf {\bibinfo {volume} {113}},\ \bibinfo {pages} {140501} (\bibinfo {year} {2014})},\ \Eprint {https://arxiv.org/abs/1210.1810}{arXiv:1210.1810 [quant-ph]}\BibitemShut {NoStop}%
\bibitem [{\citenamefont {{\v{S}}upi{\'{c}}}\ and\ \citenamefont {Bowles}(2020)}]{supic19}%
  \BibitemOpen
  \bibfield  {author} {\bibinfo {author} {\bibfnamefont {I.}~\bibnamefont {{\v{S}}upi{\'{c}}}}\ and\ \bibinfo {author} {\bibfnamefont {J.}~\bibnamefont {Bowles}},\ }\bibfield  {title} {\bibinfo {title} {Self-testing of quantum systems: a review},\ }\href {https://doi.org/10.22331/q-2020-09-30-337} {\bibfield  {journal} {\bibinfo  {journal} {{Quantum}}\ }\textbf {\bibinfo {volume} {4}},\ \bibinfo {pages} {337} (\bibinfo {year} {2020})},\ \Eprint {https://arxiv.org/abs/1904.10042}{arXiv:1904.10042 [quant-ph]}\BibitemShut {NoStop}%
\bibitem [{\citenamefont {Maudlin}(1992)}]{Maudlin1992}%
  \BibitemOpen
  \bibfield  {author} {\bibinfo {author} {\bibfnamefont {T.}~\bibnamefont {Maudlin}},\ }\bibfield  {title} {\bibinfo {title} {Bell's inequality, information transmission, and prism models},\ }\href {https://doi.org/10.1086/psaprocbienmeetp.1992.1.192771} {\bibfield  {journal} {\bibinfo  {journal} {PSA: Proceedings of the Biennial Meeting of the Philosophy of Science Association}\ }\textbf {\bibinfo {volume} {1992}},\ \bibinfo {pages} {404--417} (\bibinfo {year} {1992})}\BibitemShut {NoStop}%
\bibitem [{\citenamefont {{Brassard}}\ \emph {et~al.}(1999)\citenamefont {{Brassard}}, \citenamefont {{Cleve}},\ and\ \citenamefont {{Tapp}}}]{Brassard1999}%
  \BibitemOpen
  \bibfield  {author} {\bibinfo {author} {\bibfnamefont {G.}~\bibnamefont {{Brassard}}}, \bibinfo {author} {\bibfnamefont {R.}~\bibnamefont {{Cleve}}},\ and\ \bibinfo {author} {\bibfnamefont {A.}~\bibnamefont {{Tapp}}},\ }\bibfield  {title} {\bibinfo {title} {{Cost of Exactly Simulating Quantum Entanglement with Classical Communication}},\ }\href {https://doi.org/10.1103/PhysRevLett.83.1874} {\bibfield  {journal} {\bibinfo  {journal} {Phys. Rev. Lett.}\ }\textbf {\bibinfo {volume} {83}},\ \bibinfo {pages} {1874--1877} (\bibinfo {year} {1999})},\ \Eprint {https://arxiv.org/abs/quant-ph/9901035}{arXiv:quant-ph/9901035 [quant-ph]}\BibitemShut {NoStop}%
\bibitem [{\citenamefont {{Steiner}}(2000)}]{Steiner2000}%
  \BibitemOpen
  \bibfield  {author} {\bibinfo {author} {\bibfnamefont {M.}~\bibnamefont {{Steiner}}},\ }\bibfield  {title} {\bibinfo {title} {{Towards quantifying non-local information transfer: finite-bit non-locality}},\ }\href {https://doi.org/10.1016/S0375-9601(00)00315-7} {\bibfield  {journal} {\bibinfo  {journal} {Physics Letters A}\ }\textbf {\bibinfo {volume} {270}},\ \bibinfo {pages} {239--244} (\bibinfo {year} {2000})},\ \Eprint {https://arxiv.org/abs/quant-ph/9902014}{arXiv:quant-ph/9902014 [quant-ph]}\BibitemShut {NoStop}%
\bibitem [{\citenamefont {{Cerf}}\ \emph {et~al.}(2000)\citenamefont {{Cerf}}, \citenamefont {{Gisin}},\ and\ \citenamefont {{Massar}}}]{cerf2000}%
  \BibitemOpen
  \bibfield  {author} {\bibinfo {author} {\bibfnamefont {N.~J.}\ \bibnamefont {{Cerf}}}, \bibinfo {author} {\bibfnamefont {N.}~\bibnamefont {{Gisin}}},\ and\ \bibinfo {author} {\bibfnamefont {S.}~\bibnamefont {{Massar}}},\ }\bibfield  {title} {\bibinfo {title} {{Classical Teleportation of a Quantum Bit}},\ }\href {https://doi.org/10.1103/PhysRevLett.84.2521} {\bibfield  {journal} {\bibinfo  {journal} {Phys. Rev. Lett.}\ }\textbf {\bibinfo {volume} {84}},\ \bibinfo {pages} {2521--2524} (\bibinfo {year} {2000})},\ \Eprint {https://arxiv.org/abs/quant-ph/9906105}{arXiv:quant-ph/9906105 [quant-ph]}\BibitemShut {NoStop}%
\bibitem [{\citenamefont {{Pati}}(2000)}]{Pati2000}%
  \BibitemOpen
  \bibfield  {author} {\bibinfo {author} {\bibfnamefont {A.~K.}\ \bibnamefont {{Pati}}},\ }\bibfield  {title} {\bibinfo {title} {{Minimum classical bit for remote preparation and measurement of a qubit}},\ }\href {https://doi.org/10.1103/PhysRevA.63.014302} {\bibfield  {journal} {\bibinfo  {journal} {Phys. Rev. A}\ }\textbf {\bibinfo {volume} {63}},\ \bibinfo {eid} {014302} (\bibinfo {year} {2000})},\ \Eprint {https://arxiv.org/abs/quant-ph/9907022}{arXiv:quant-ph/9907022 [quant-ph]}\BibitemShut {NoStop}%
\bibitem [{\citenamefont {{Massar}}\ \emph {et~al.}(2001)\citenamefont {{Massar}}, \citenamefont {{Bacon}}, \citenamefont {{Cerf}},\ and\ \citenamefont {{Cleve}}}]{massar2001}%
  \BibitemOpen
  \bibfield  {author} {\bibinfo {author} {\bibfnamefont {S.}~\bibnamefont {{Massar}}}, \bibinfo {author} {\bibfnamefont {D.}~\bibnamefont {{Bacon}}}, \bibinfo {author} {\bibfnamefont {N.~J.}\ \bibnamefont {{Cerf}}},\ and\ \bibinfo {author} {\bibfnamefont {R.}~\bibnamefont {{Cleve}}},\ }\bibfield  {title} {\bibinfo {title} {{Classical simulation of quantum entanglement without local hidden variables}},\ }\href {https://doi.org/10.1103/PhysRevA.63.052305} {\bibfield  {journal} {\bibinfo  {journal} {Phys. Rev. A}\ }\textbf {\bibinfo {volume} {63}},\ \bibinfo {eid} {052305} (\bibinfo {year} {2001})},\ \Eprint {https://arxiv.org/abs/quant-ph/0009088}{arXiv:quant-ph/0009088 [quant-ph]}\BibitemShut {NoStop}%
\bibitem [{\citenamefont {{Toner}}\ and\ \citenamefont {{Bacon}}(2003)}]{tonerbacon2003}%
  \BibitemOpen
  \bibfield  {author} {\bibinfo {author} {\bibfnamefont {B.~F.}\ \bibnamefont {{Toner}}}\ and\ \bibinfo {author} {\bibfnamefont {D.}~\bibnamefont {{Bacon}}},\ }\bibfield  {title} {\bibinfo {title} {{Communication Cost of Simulating Bell Correlations}},\ }\href {https://doi.org/10.1103/PhysRevLett.91.187904} {\bibfield  {journal} {\bibinfo  {journal} {Phys. Rev. Lett.}\ }\textbf {\bibinfo {volume} {91}},\ \bibinfo {eid} {187904} (\bibinfo {year} {2003})},\ \Eprint {https://arxiv.org/abs/quant-ph/0304076}{arXiv:quant-ph/0304076 [quant-ph]}\BibitemShut {NoStop}%
\bibitem [{\citenamefont {{Bacon}}\ and\ \citenamefont {{Toner}}(2003)}]{Bacon2003}%
  \BibitemOpen
  \bibfield  {author} {\bibinfo {author} {\bibfnamefont {D.}~\bibnamefont {{Bacon}}}\ and\ \bibinfo {author} {\bibfnamefont {B.~F.}\ \bibnamefont {{Toner}}},\ }\bibfield  {title} {\bibinfo {title} {{Bell Inequalities with Auxiliary Communication}},\ }\href {https://doi.org/10.1103/PhysRevLett.90.157904} {\bibfield  {journal} {\bibinfo  {journal} {Phys. Rev. Lett.}\ }\textbf {\bibinfo {volume} {90}},\ \bibinfo {eid} {157904} (\bibinfo {year} {2003})},\ \Eprint {https://arxiv.org/abs/quant-ph/0208057}{arXiv:quant-ph/0208057 [quant-ph]}\BibitemShut {NoStop}%
\bibitem [{\citenamefont {{Degorre}}\ \emph {et~al.}(2005)\citenamefont {{Degorre}}, \citenamefont {{Laplante}},\ and\ \citenamefont {{Roland}}}]{degorre2005}%
  \BibitemOpen
  \bibfield  {author} {\bibinfo {author} {\bibfnamefont {J.}~\bibnamefont {{Degorre}}}, \bibinfo {author} {\bibfnamefont {S.}~\bibnamefont {{Laplante}}},\ and\ \bibinfo {author} {\bibfnamefont {J.}~\bibnamefont {{Roland}}},\ }\bibfield  {title} {\bibinfo {title} {{Simulating quantum correlations as a distributed sampling problem}},\ }\href {https://doi.org/10.1103/PhysRevA.72.062314} {\bibfield  {journal} {\bibinfo  {journal} {Phys. Rev. A}\ }\textbf {\bibinfo {volume} {72}},\ \bibinfo {eid} {062314} (\bibinfo {year} {2005})},\ \Eprint {https://arxiv.org/abs/quant-ph/0507120}{arXiv:quant-ph/0507120 [quant-ph]}\BibitemShut {NoStop}%
\bibitem [{\citenamefont {{Degorre}}\ \emph {et~al.}(2007)\citenamefont {{Degorre}}, \citenamefont {{Laplante}},\ and\ \citenamefont {{Roland}}}]{Degorre2007}%
  \BibitemOpen
  \bibfield  {author} {\bibinfo {author} {\bibfnamefont {J.}~\bibnamefont {{Degorre}}}, \bibinfo {author} {\bibfnamefont {S.}~\bibnamefont {{Laplante}}},\ and\ \bibinfo {author} {\bibfnamefont {J.}~\bibnamefont {{Roland}}},\ }\bibfield  {title} {\bibinfo {title} {{Classical simulation of traceless binary observables on any bipartite quantum state}},\ }\href {https://doi.org/10.1103/PhysRevA.75.012309} {\bibfield  {journal} {\bibinfo  {journal} {Phys. Rev. A}\ }\textbf {\bibinfo {volume} {75}},\ \bibinfo {eid} {012309} (\bibinfo {year} {2007})},\ \Eprint {https://arxiv.org/abs/quant-ph/0608064}{arXiv:quant-ph/0608064 [quant-ph]}\BibitemShut {NoStop}%
\bibitem [{\citenamefont {Regev}\ and\ \citenamefont {Toner}(2010)}]{Regev2010}%
  \BibitemOpen
  \bibfield  {author} {\bibinfo {author} {\bibfnamefont {O.}~\bibnamefont {Regev}}\ and\ \bibinfo {author} {\bibfnamefont {B.}~\bibnamefont {Toner}},\ }\bibfield  {title} {\bibinfo {title} {Simulating quantum correlations with finite communication},\ }\href {https://doi.org/10.1137/080723909} {\bibfield  {journal} {\bibinfo  {journal} {SIAM Journal on Computing}\ }\textbf {\bibinfo {volume} {39}},\ \bibinfo {pages} {1562--1580} (\bibinfo {year} {2010})},\ \Eprint {https://arxiv.org/abs/0708.0827}{arXiv:0708.0827 [quant-ph]}\BibitemShut {NoStop}%
\bibitem [{\citenamefont {{Branciard}}\ and\ \citenamefont {{Gisin}}(2011)}]{Branciard2011}%
  \BibitemOpen
  \bibfield  {author} {\bibinfo {author} {\bibfnamefont {C.}~\bibnamefont {{Branciard}}}\ and\ \bibinfo {author} {\bibfnamefont {N.}~\bibnamefont {{Gisin}}},\ }\bibfield  {title} {\bibinfo {title} {{Quantifying the Nonlocality of Greenberger-Horne-Zeilinger Quantum Correlations by a Bounded Communication Simulation Protocol}},\ }\href {https://doi.org/10.1103/PhysRevLett.107.020401} {\bibfield  {journal} {\bibinfo  {journal} {Phys. Rev. Lett.}\ }\textbf {\bibinfo {volume} {107}},\ \bibinfo {eid} {020401} (\bibinfo {year} {2011})},\ \Eprint {https://arxiv.org/abs/1102.0330}{arXiv:1102.0330 [quant-ph]}\BibitemShut {NoStop}%
\bibitem [{\citenamefont {{Branciard}}\ \emph {et~al.}(2012)\citenamefont {{Branciard}}, \citenamefont {{Brunner}}, \citenamefont {{Buhrman}}, \citenamefont {{Cleve}}, \citenamefont {{Gisin}}, \citenamefont {{Portmann}}, \citenamefont {{Rosset}},\ and\ \citenamefont {{Szegedy}}}]{Branciard2012}%
  \BibitemOpen
  \bibfield  {author} {\bibinfo {author} {\bibfnamefont {C.}~\bibnamefont {{Branciard}}}, \bibinfo {author} {\bibfnamefont {N.}~\bibnamefont {{Brunner}}}, \bibinfo {author} {\bibfnamefont {H.}~\bibnamefont {{Buhrman}}}, \bibinfo {author} {\bibfnamefont {R.}~\bibnamefont {{Cleve}}}, \bibinfo {author} {\bibfnamefont {N.}~\bibnamefont {{Gisin}}}, \bibinfo {author} {\bibfnamefont {S.}~\bibnamefont {{Portmann}}}, \bibinfo {author} {\bibfnamefont {D.}~\bibnamefont {{Rosset}}},\ and\ \bibinfo {author} {\bibfnamefont {M.}~\bibnamefont {{Szegedy}}},\ }\bibfield  {title} {\bibinfo {title} {{Classical Simulation of Entanglement Swapping with Bounded Communication}},\ }\href {https://doi.org/10.1103/PhysRevLett.109.100401} {\bibfield  {journal} {\bibinfo  {journal} {Phys. Rev. Lett.}\ }\textbf {\bibinfo {volume} {109}},\ \bibinfo {eid} {100401} (\bibinfo {year} {2012})},\ \Eprint {https://arxiv.org/abs/1203.0445}{arXiv:1203.0445 [quant-ph]}\BibitemShut {NoStop}%
\bibitem [{\citenamefont {{Maxwell}}\ and\ \citenamefont {{Chitambar}}(2014)}]{Maxwell2014}%
  \BibitemOpen
  \bibfield  {author} {\bibinfo {author} {\bibfnamefont {K.}~\bibnamefont {{Maxwell}}}\ and\ \bibinfo {author} {\bibfnamefont {E.}~\bibnamefont {{Chitambar}}},\ }\bibfield  {title} {\bibinfo {title} {{Bell inequalities with communication assistance}},\ }\href {https://doi.org/10.1103/PhysRevA.89.042108} {\bibfield  {journal} {\bibinfo  {journal} {Phys. Rev. A}\ }\textbf {\bibinfo {volume} {89}},\ \bibinfo {eid} {042108} (\bibinfo {year} {2014})},\ \Eprint {https://arxiv.org/abs/1405.3211}{arXiv:1405.3211 [quant-ph]}\BibitemShut {NoStop}%
\bibitem [{\citenamefont {Brassard}\ \emph {et~al.}(2016)\citenamefont {Brassard}, \citenamefont {Devroye},\ and\ \citenamefont {Gravel}}]{Brassard2015}%
  \BibitemOpen
  \bibfield  {author} {\bibinfo {author} {\bibfnamefont {G.}~\bibnamefont {Brassard}}, \bibinfo {author} {\bibfnamefont {L.}~\bibnamefont {Devroye}},\ and\ \bibinfo {author} {\bibfnamefont {C.}~\bibnamefont {Gravel}},\ }\bibfield  {title} {\bibinfo {title} {Exact classical simulation of the quantum-mechanical ghz distribution},\ }\href {https://doi.org/10.1109/TIT.2015.2504525} {\bibfield  {journal} {\bibinfo  {journal} {IEEE Transactions on Information Theory}\ }\textbf {\bibinfo {volume} {62}},\ \bibinfo {pages} {876--890} (\bibinfo {year} {2016})},\ \Eprint {https://arxiv.org/abs/1303.5942}{arXiv:1303.5942 [cs.IT]}\BibitemShut {NoStop}%
\bibitem [{\citenamefont {{Brassard}}\ \emph {et~al.}(2019)\citenamefont {{Brassard}}, \citenamefont {{Devroye}},\ and\ \citenamefont {{Gravel}}}]{Brassard2019}%
  \BibitemOpen
  \bibfield  {author} {\bibinfo {author} {\bibfnamefont {G.}~\bibnamefont {{Brassard}}}, \bibinfo {author} {\bibfnamefont {L.}~\bibnamefont {{Devroye}}},\ and\ \bibinfo {author} {\bibfnamefont {C.}~\bibnamefont {{Gravel}}},\ }\bibfield  {title} {\bibinfo {title} {{Remote Sampling with Applications to General Entanglement Simulation}},\ }\href {https://doi.org/10.3390/e21010092} {\bibfield  {journal} {\bibinfo  {journal} {Entropy}\ }\textbf {\bibinfo {volume} {21}},\ \bibinfo {pages} {92} (\bibinfo {year} {2019})},\ \Eprint {https://arxiv.org/abs/1807.06649}{arXiv:1807.06649 [quant-ph]}\BibitemShut {NoStop}%
\bibitem [{\citenamefont {{Zambrini Cruzeiro}}\ and\ \citenamefont {{Gisin}}(2019)}]{Zambrini2019}%
  \BibitemOpen
  \bibfield  {author} {\bibinfo {author} {\bibfnamefont {E.}~\bibnamefont {{Zambrini Cruzeiro}}}\ and\ \bibinfo {author} {\bibfnamefont {N.}~\bibnamefont {{Gisin}}},\ }\bibfield  {title} {\bibinfo {title} {{Bell Inequalities with One Bit of Communication}},\ }\href {https://doi.org/10.3390/e21020171} {\bibfield  {journal} {\bibinfo  {journal} {Entropy}\ }\textbf {\bibinfo {volume} {21}},\ \bibinfo {pages} {171} (\bibinfo {year} {2019})},\ \Eprint {https://arxiv.org/abs/1812.05107}{arXiv:1812.05107 [quant-ph]}\BibitemShut {NoStop}%
\bibitem [{\citenamefont {{Renner}}\ \emph {et~al.}(2023)\citenamefont {{Renner}}, \citenamefont {{Tavakoli}},\ and\ \citenamefont {{Quintino}}}]{Renner2022}%
  \BibitemOpen
  \bibfield  {author} {\bibinfo {author} {\bibfnamefont {M.~J.}\ \bibnamefont {{Renner}}}, \bibinfo {author} {\bibfnamefont {A.}~\bibnamefont {{Tavakoli}}},\ and\ \bibinfo {author} {\bibfnamefont {M.~T.}\ \bibnamefont {{Quintino}}},\ }\bibfield  {title} {\bibinfo {title} {{Classical Cost of Transmitting a Qubit}},\ }\href {https://doi.org/10.1103/PhysRevLett.130.120801} {\bibfield  {journal} {\bibinfo  {journal} {Phys. Rev. Lett.}\ }\textbf {\bibinfo {volume} {130}},\ \bibinfo {eid} {120801} (\bibinfo {year} {2023})},\ \Eprint {https://arxiv.org/abs/2207.02244}{arXiv:2207.02244 [quant-ph]}\BibitemShut {NoStop}%
\bibitem [{\citenamefont {{Brunner}}\ \emph {et~al.}(2005)\citenamefont {{Brunner}}, \citenamefont {{Gisin}},\ and\ \citenamefont {{Scarani}}}]{brunner2005}%
  \BibitemOpen
  \bibfield  {author} {\bibinfo {author} {\bibfnamefont {N.}~\bibnamefont {{Brunner}}}, \bibinfo {author} {\bibfnamefont {N.}~\bibnamefont {{Gisin}}},\ and\ \bibinfo {author} {\bibfnamefont {V.}~\bibnamefont {{Scarani}}},\ }\bibfield  {title} {\bibinfo {title} {{Entanglement and non-locality are different resources}},\ }\href {https://doi.org/10.1088/1367-2630/7/1/088} {\bibfield  {journal} {\bibinfo  {journal} {New Journal of Physics}\ }\textbf {\bibinfo {volume} {7}},\ \bibinfo {pages} {88} (\bibinfo {year} {2005})},\ \Eprint {https://arxiv.org/abs/quant-ph/0412109}{arXiv:quant-ph/0412109 [quant-ph]}\BibitemShut {NoStop}%
\bibitem [{\citenamefont {{Cerf}}\ \emph {et~al.}(2005)\citenamefont {{Cerf}}, \citenamefont {{Gisin}}, \citenamefont {{Massar}},\ and\ \citenamefont {{Popescu}}}]{cerfprbox}%
  \BibitemOpen
  \bibfield  {author} {\bibinfo {author} {\bibfnamefont {N.~J.}\ \bibnamefont {{Cerf}}}, \bibinfo {author} {\bibfnamefont {N.}~\bibnamefont {{Gisin}}}, \bibinfo {author} {\bibfnamefont {S.}~\bibnamefont {{Massar}}},\ and\ \bibinfo {author} {\bibfnamefont {S.}~\bibnamefont {{Popescu}}},\ }\bibfield  {title} {\bibinfo {title} {{Simulating Maximal Quantum Entanglement without Communication}},\ }\href {https://doi.org/10.1103/PhysRevLett.94.220403} {\bibfield  {journal} {\bibinfo  {journal} {Phys. Rev. Lett.}\ }\textbf {\bibinfo {volume} {94}},\ \bibinfo {eid} {220403} (\bibinfo {year} {2005})},\ \Eprint {https://arxiv.org/abs/quant-ph/0410027}{arXiv:quant-ph/0410027 [quant-ph]}\BibitemShut {NoStop}%
\bibitem [{\citenamefont {Eberhard}(1993)}]{Eberhard92}%
  \BibitemOpen
  \bibfield  {author} {\bibinfo {author} {\bibfnamefont {P.~H.}\ \bibnamefont {Eberhard}},\ }\bibfield  {title} {\bibinfo {title} {Background level and counter efficiencies required for a loophole-free einstein-podolsky-rosen experiment},\ }\href {https://doi.org/10.1103/PhysRevA.47.R747} {\bibfield  {journal} {\bibinfo  {journal} {Phys. Rev. A}\ }\textbf {\bibinfo {volume} {47}},\ \bibinfo {pages} {R747--R750} (\bibinfo {year} {1993})}\BibitemShut {NoStop}%
\bibitem [{\citenamefont {{Cabello}}\ and\ \citenamefont {{Larsson}}(2007)}]{Cabello2007detection}%
  \BibitemOpen
  \bibfield  {author} {\bibinfo {author} {\bibfnamefont {A.}~\bibnamefont {{Cabello}}}\ and\ \bibinfo {author} {\bibfnamefont {J.-{\r{A}}.}\ \bibnamefont {{Larsson}}},\ }\bibfield  {title} {\bibinfo {title} {{Minimum Detection Efficiency for a Loophole-Free Atom-Photon Bell Experiment}},\ }\href {https://doi.org/10.1103/PhysRevLett.98.220402} {\bibfield  {journal} {\bibinfo  {journal} {Phys. Rev. Lett.}\ }\textbf {\bibinfo {volume} {98}},\ \bibinfo {eid} {220402} (\bibinfo {year} {2007})},\ \Eprint {https://arxiv.org/abs/quant-ph/0701191}{arXiv:quant-ph/0701191 [quant-ph]}\BibitemShut {NoStop}%
\bibitem [{\citenamefont {{Brunner}}\ \emph {et~al.}(2007)\citenamefont {{Brunner}}, \citenamefont {{Gisin}}, \citenamefont {{Scarani}},\ and\ \citenamefont {{Simon}}}]{Brunner2007detection}%
  \BibitemOpen
  \bibfield  {author} {\bibinfo {author} {\bibfnamefont {N.}~\bibnamefont {{Brunner}}}, \bibinfo {author} {\bibfnamefont {N.}~\bibnamefont {{Gisin}}}, \bibinfo {author} {\bibfnamefont {V.}~\bibnamefont {{Scarani}}},\ and\ \bibinfo {author} {\bibfnamefont {C.}~\bibnamefont {{Simon}}},\ }\bibfield  {title} {\bibinfo {title} {{Detection Loophole in Asymmetric Bell Experiments}},\ }\href {https://doi.org/10.1103/PhysRevLett.98.220403} {\bibfield  {journal} {\bibinfo  {journal} {Phys. Rev. Lett.}\ }\textbf {\bibinfo {volume} {98}},\ \bibinfo {eid} {220403} (\bibinfo {year} {2007})},\ \Eprint {https://arxiv.org/abs/quant-ph/0702130}{arXiv:quant-ph/0702130 [quant-ph]}\BibitemShut {NoStop}%
\bibitem [{\citenamefont {{Ara{\'u}jo}}\ \emph {et~al.}(2012)\citenamefont {{Ara{\'u}jo}}, \citenamefont {{Quintino}}, \citenamefont {{Cavalcanti}}, \citenamefont {{Santos}}, \citenamefont {{Cabello}},\ and\ \citenamefont {{Cunha}}}]{araujo12}%
  \BibitemOpen
  \bibfield  {author} {\bibinfo {author} {\bibfnamefont {M.}~\bibnamefont {{Ara{\'u}jo}}}, \bibinfo {author} {\bibfnamefont {M.~T.}\ \bibnamefont {{Quintino}}}, \bibinfo {author} {\bibfnamefont {D.}~\bibnamefont {{Cavalcanti}}}, \bibinfo {author} {\bibfnamefont {M.~F.}\ \bibnamefont {{Santos}}}, \bibinfo {author} {\bibfnamefont {A.}~\bibnamefont {{Cabello}}},\ and\ \bibinfo {author} {\bibfnamefont {M.~T.}\ \bibnamefont {{Cunha}}},\ }\bibfield  {title} {\bibinfo {title} {{Tests of Bell inequality with arbitrarily low photodetection efficiency and homodyne measurements}},\ }\href {https://doi.org/10.1103/PhysRevA.86.030101} {\bibfield  {journal} {\bibinfo  {journal} {Phys. Rev. A}\ }\textbf {\bibinfo {volume} {86}},\ \bibinfo {eid} {030101} (\bibinfo {year} {2012})},\ \Eprint {https://arxiv.org/abs/1112.1719}{arXiv:1112.1719 [quant-ph]}\BibitemShut {NoStop}%
\bibitem [{\citenamefont {Kochen}\ and\ \citenamefont {Specker}(1967)}]{KochenSpecker1967}%
  \BibitemOpen
  \bibfield  {author} {\bibinfo {author} {\bibfnamefont {S.}~\bibnamefont {Kochen}}\ and\ \bibinfo {author} {\bibfnamefont {E.~P.}\ \bibnamefont {Specker}},\ }\bibfield  {title} {\bibinfo {title} {The problem of hidden variables in quantum mechanics},\ }\href {http://www.jstor.org/stable/24902153} {\bibfield  {journal} {\bibinfo  {journal} {Journal of Mathematics and Mechanics}\ }\textbf {\bibinfo {volume} {17}},\ \bibinfo {pages} {59--87} (\bibinfo {year} {1967})}\BibitemShut {NoStop}%
\bibitem [{\citenamefont {{Gisin}}\ and\ \citenamefont {{Gisin}}(1999)}]{gisingisin1999}%
  \BibitemOpen
  \bibfield  {author} {\bibinfo {author} {\bibfnamefont {N.}~\bibnamefont {{Gisin}}}\ and\ \bibinfo {author} {\bibfnamefont {B.}~\bibnamefont {{Gisin}}},\ }\bibfield  {title} {\bibinfo {title} {{A local hidden variable model of quantum correlation exploiting the detection loophole}},\ }\href {https://doi.org/10.1016/S0375-9601(99)00519-8} {\bibfield  {journal} {\bibinfo  {journal} {Physics Letters A}\ }\textbf {\bibinfo {volume} {260}},\ \bibinfo {pages} {323--327} (\bibinfo {year} {1999})},\ \Eprint {https://arxiv.org/abs/quant-ph/9905018}{arXiv:quant-ph/9905018 [quant-ph]}\BibitemShut {NoStop}%
\bibitem [{\citenamefont {{Gisin}}(1991)}]{Gisin1991}%
  \BibitemOpen
  \bibfield  {author} {\bibinfo {author} {\bibfnamefont {N.}~\bibnamefont {{Gisin}}},\ }\bibfield  {title} {\bibinfo {title} {{Bell's inequality holds for all non-product states}},\ }\href {https://doi.org/10.1016/0375-9601(91)90805-I} {\bibfield  {journal} {\bibinfo  {journal} {Physics Letters A}\ }\textbf {\bibinfo {volume} {154}},\ \bibinfo {pages} {201--202} (\bibinfo {year} {1991})}\BibitemShut {NoStop}%
\bibitem [{\citenamefont {Elitzur}\ \emph {et~al.}(1992)\citenamefont {Elitzur}, \citenamefont {Popescu},\ and\ \citenamefont {Rohrlich}}]{elitzur1992}%
  \BibitemOpen
  \bibfield  {author} {\bibinfo {author} {\bibfnamefont {A.~C.}\ \bibnamefont {Elitzur}}, \bibinfo {author} {\bibfnamefont {S.}~\bibnamefont {Popescu}},\ and\ \bibinfo {author} {\bibfnamefont {D.}~\bibnamefont {Rohrlich}},\ }\bibfield  {title} {\bibinfo {title} {Quantum nonlocality for each pair in an ensemble},\ }\href {https://doi.org/https://doi.org/10.1016/0375-9601(92)90952-I} {\bibfield  {journal} {\bibinfo  {journal} {Physics Letters A}\ }\textbf {\bibinfo {volume} {162}},\ \bibinfo {pages} {25--28} (\bibinfo {year} {1992})}\BibitemShut {NoStop}%
\bibitem [{\citenamefont {{Barrett}}\ \emph {et~al.}(2006)\citenamefont {{Barrett}}, \citenamefont {{Kent}},\ and\ \citenamefont {{Pironio}}}]{barrett2006}%
  \BibitemOpen
  \bibfield  {author} {\bibinfo {author} {\bibfnamefont {J.}~\bibnamefont {{Barrett}}}, \bibinfo {author} {\bibfnamefont {A.}~\bibnamefont {{Kent}}},\ and\ \bibinfo {author} {\bibfnamefont {S.}~\bibnamefont {{Pironio}}},\ }\bibfield  {title} {\bibinfo {title} {{Maximally Nonlocal and Monogamous Quantum Correlations}},\ }\href {https://doi.org/10.1103/PhysRevLett.97.170409} {\bibfield  {journal} {\bibinfo  {journal} {Phys. Rev. Lett.}\ }\textbf {\bibinfo {volume} {97}},\ \bibinfo {eid} {170409} (\bibinfo {year} {2006})},\ \Eprint {https://arxiv.org/abs/quant-ph/0605182}{arXiv:quant-ph/0605182 [quant-ph]}\BibitemShut {NoStop}%
\bibitem [{\citenamefont {{Portmann}}\ \emph {et~al.}(2012)\citenamefont {{Portmann}}, \citenamefont {{Branciard}},\ and\ \citenamefont {{Gisin}}}]{Portmann2012localpart}%
  \BibitemOpen
  \bibfield  {author} {\bibinfo {author} {\bibfnamefont {S.}~\bibnamefont {{Portmann}}}, \bibinfo {author} {\bibfnamefont {C.}~\bibnamefont {{Branciard}}},\ and\ \bibinfo {author} {\bibfnamefont {N.}~\bibnamefont {{Gisin}}},\ }\bibfield  {title} {\bibinfo {title} {{Local content of all pure two-qubit states}},\ }\href {https://doi.org/10.1103/PhysRevA.86.012104} {\bibfield  {journal} {\bibinfo  {journal} {Phys. Rev. A}\ }\textbf {\bibinfo {volume} {86}},\ \bibinfo {eid} {012104} (\bibinfo {year} {2012})},\ \Eprint {https://arxiv.org/abs/1204.2982}{arXiv:1204.2982 [quant-ph]}\BibitemShut {NoStop}%
\bibitem [{\citenamefont {{Sidajaya}}\ \emph {et~al.}(2023)\citenamefont {{Sidajaya}}, \citenamefont {{Dewen Lim}}, \citenamefont {{Yu}},\ and\ \citenamefont {{Scarani}}}]{sidajaya23}%
  \BibitemOpen
  \bibfield  {author} {\bibinfo {author} {\bibfnamefont {P.}~\bibnamefont {{Sidajaya}}}, \bibinfo {author} {\bibfnamefont {A.}~\bibnamefont {{Dewen Lim}}}, \bibinfo {author} {\bibfnamefont {B.}~\bibnamefont {{Yu}}},\ and\ \bibinfo {author} {\bibfnamefont {V.}~\bibnamefont {{Scarani}}},\ }\bibfield  {title} {\bibinfo {title} {{Neural Network Approach to the Simulation of Entangled States with One Bit of Communication}},\ }\href@noop {} {\bibfield  {journal} {\bibinfo  {journal} {arXiv e-prints}\ } (\bibinfo {year} {2023})},\ \Eprint {https://arxiv.org/abs/2305.19935}{arXiv:2305.19935 [quant-ph]}\BibitemShut {NoStop}%
\bibitem [{\citenamefont {{Gisin}}\ and\ \citenamefont {{Fr{\"o}wis}}(2018)}]{Gisinreview2018}%
  \BibitemOpen
  \bibfield  {author} {\bibinfo {author} {\bibfnamefont {N.}~\bibnamefont {{Gisin}}}\ and\ \bibinfo {author} {\bibfnamefont {F.}~\bibnamefont {{Fr{\"o}wis}}},\ }\bibfield  {title} {\bibinfo {title} {{From quantum foundations to applications and back}},\ }\href {https://doi.org/10.1098/rsta.2017.0326} {\bibfield  {journal} {\bibinfo  {journal} {Philosophical Transactions of the Royal Society of London Series A}\ }\textbf {\bibinfo {volume} {376}},\ \bibinfo {eid} {20170326} (\bibinfo {year} {2018})},\ \Eprint {https://arxiv.org/abs/1802.00736}{arXiv:1802.00736 [quant-ph]}\BibitemShut {NoStop}%
\bibitem [{\citenamefont {Brassard}(2003)}]{Brassard2003}%
  \BibitemOpen
  \bibfield  {author} {\bibinfo {author} {\bibfnamefont {G.}~\bibnamefont {Brassard}},\ }\bibfield  {title} {\bibinfo {title} {Quantum communication complexity},\ }\href {https://doi.org/10.1023/A:1026009100467} {\bibfield  {journal} {\bibinfo  {journal} {Foundations of Physics}\ }\textbf {\bibinfo {volume} {33}},\ \bibinfo {pages} {1593--1616} (\bibinfo {year} {2003})}\BibitemShut {NoStop}%
\bibitem [{\citenamefont {Brunner}\ \emph {et~al.}(2014)\citenamefont {Brunner}, \citenamefont {Cavalcanti}, \citenamefont {Pironio}, \citenamefont {Scarani},\ and\ \citenamefont {Wehner}}]{Brunner_2014}%
  \BibitemOpen
  \bibfield  {author} {\bibinfo {author} {\bibfnamefont {N.}~\bibnamefont {Brunner}}, \bibinfo {author} {\bibfnamefont {D.}~\bibnamefont {Cavalcanti}}, \bibinfo {author} {\bibfnamefont {S.}~\bibnamefont {Pironio}}, \bibinfo {author} {\bibfnamefont {V.}~\bibnamefont {Scarani}},\ and\ \bibinfo {author} {\bibfnamefont {S.}~\bibnamefont {Wehner}},\ }\bibfield  {title} {\bibinfo {title} {Bell nonlocality},\ }\href {https://doi.org/10.1103/revmodphys.86.419} {\bibfield  {journal} {\bibinfo  {journal} {Reviews of Modern Physics}\ }\textbf {\bibinfo {volume} {86}},\ \bibinfo {pages} {419--478} (\bibinfo {year} {2014})},\ \Eprint {https://arxiv.org/abs/1303.2849}{arXiv:1303.2849 [quant-ph]}\BibitemShut {NoStop}%
\bibitem [{\citenamefont {{Scarani}}(2008)}]{Scarani2008localpart}%
  \BibitemOpen
  \bibfield  {author} {\bibinfo {author} {\bibfnamefont {V.}~\bibnamefont {{Scarani}}},\ }\bibfield  {title} {\bibinfo {title} {{Local and nonlocal content of bipartite qubit and qutrit correlations}},\ }\href {https://doi.org/10.1103/PhysRevA.77.042112} {\bibfield  {journal} {\bibinfo  {journal} {Phys. Rev. A}\ }\textbf {\bibinfo {volume} {77}},\ \bibinfo {eid} {042112} (\bibinfo {year} {2008})},\ \Eprint {https://arxiv.org/abs/0712.2307}{arXiv:0712.2307 [quant-ph]}\BibitemShut {NoStop}%
\bibitem [{\citenamefont {{Branciard}}\ \emph {et~al.}(2010)\citenamefont {{Branciard}}, \citenamefont {{Gisin}},\ and\ \citenamefont {{Scarani}}}]{Branciard2010localpart}%
  \BibitemOpen
  \bibfield  {author} {\bibinfo {author} {\bibfnamefont {C.}~\bibnamefont {{Branciard}}}, \bibinfo {author} {\bibfnamefont {N.}~\bibnamefont {{Gisin}}},\ and\ \bibinfo {author} {\bibfnamefont {V.}~\bibnamefont {{Scarani}}},\ }\bibfield  {title} {\bibinfo {title} {{Local content of bipartite qubit correlations}},\ }\href {https://doi.org/10.1103/PhysRevA.81.022103} {\bibfield  {journal} {\bibinfo  {journal} {Phys. Rev. A}\ }\textbf {\bibinfo {volume} {81}},\ \bibinfo {eid} {022103} (\bibinfo {year} {2010})},\ \Eprint {https://arxiv.org/abs/0909.3839}{arXiv:0909.3839 [quant-ph]}\BibitemShut {NoStop}%
\end{thebibliography}%


\onecolumngrid

\appendix

\section{Proof of Lemma~\ref{lemma1}} \label{prooflemma}
\setcounter{lemma}{0}
\begin{lemma}
Bob receives a vector $\vec{\lambda}\in S_2$ distributed as $\rho(\vec{\lambda})=\Theta(\vec{v}\cdot \vec{\lambda})/\pi$ and outputs $b=\sgn(\vec{y}\cdot \vec{\lambda})$. For every qubit state $\vec{v}\in S_2$ and measurement $\vec{y}\in S_2$ this reproduces quantum correlations:
\begin{align}
    p(b=\pm 1|\vec{y},\vec{v})&=(1 \pm \vec{y}\cdot \vec{v})/2=|\braket{\pm \vec{y}}{\vec{v}}|^2\, .
\end{align}
\end{lemma}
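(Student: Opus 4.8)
The plan is to compute $p(b=+1|\vec{y},\vec{v})$ directly as a surface integral over $S_2$, show it equals $(1+\vec{y}\cdot\vec{v})/2$, and then obtain the $b=-1$ case from normalization and the identification with $|\braket{\pm\vec{y}}{\vec{v}}|^2$ from a short Bloch-sphere calculation. Since Bob outputs $b=\sgn(\vec{y}\cdot\vec{\lambda})$ deterministically and $\vec{\lambda}$ is drawn from $\rho(\vec{\lambda})=\Theta(\vec{v}\cdot\vec{\lambda})/\pi$, the starting point is
\[
p(b=+1|\vec{y},\vec{v})=\frac{1}{\pi}\int_{S_2}\Theta(\vec{v}\cdot\vec{\lambda})\,H(\vec{y}\cdot\vec{\lambda})\,\mathrm{d}\vec{\lambda}\, ,
\]
and I would use $H(z)=\tfrac{1}{2}\bigl(1+\sgn(z)\bigr)$ to split this into $\tfrac{1}{2\pi}\int_{S_2}\Theta(\vec{v}\cdot\vec{\lambda})\,\mathrm{d}\vec{\lambda}$ plus $\tfrac{1}{2\pi}\int_{S_2}\Theta(\vec{v}\cdot\vec{\lambda})\,\sgn(\vec{y}\cdot\vec{\lambda})\,\mathrm{d}\vec{\lambda}$. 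The first term is $\tfrac{1}{2}$ times the normalization integral $\int_{S_2}\Theta(\vec{v}\cdot\vec{\lambda})/\pi\,\mathrm{d}\vec{\lambda}$, which equals $1$ (put $\vec{v}$ at the pole of spherical coordinates: $\tfrac{1}{\pi}\,2\pi\int_0^{\pi/2}\cos\theta\sin\theta\,\mathrm{d}\theta=1$), so it contributes $\tfrac{1}{2}$. Everything then reduces to showing the second integral equals $\tfrac{1}{2}\,\vec{y}\cdot\vec{v}$.

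For that I would write $\Theta(\vec{v}\cdot\vec{\lambda})=\tfrac{1}{2}\bigl(\vec{v}\cdot\vec{\lambda}+|\vec{v}\cdot\vec{\lambda}|\bigr)$. The contribution of $|\vec{v}\cdot\vec{\lambda}|\,\sgn(\vec{y}\cdot\vec{\lambda})$ vanishes because this function is odd under the antipodal map $\vec{\lambda}\mapsto-\vec{\lambda}$, which leaves the surface measure invariant. What remains is $\tfrac{1}{4\pi}\int_{S_2}(\vec{v}\cdot\vec{\lambda})\,\sgn(\vec{y}\cdot\vec{\lambda})\,\mathrm{d}\vec{\lambda}$; splitting $S_2$ into the hemispheres $\{\vec{y}\cdot\vec{\lambda}>0\}$ and $\{\vec{y}\cdot\vec{\lambda}<0\}$ and using the antipodal map once more, the two hemispheres contribute equally, so this equals $\tfrac{1}{2\pi}\,\vec{v}\cdot\!\int_{\{\vec{y}\cdot\vec{\lambda}>0\}}\vec{\lambda}\,\mathrm{d}\vec{\lambda}$. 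By rotational symmetry about the $\vec{y}$-axis the vector $\int_{\{\vec{y}\cdot\vec{\lambda}>0\}}\vec{\lambda}\,\mathrm{d}\vec{\lambda}$ is a positive multiple of $\vec{y}$; dotting it with $\vec{y}$ and computing $2\pi\int_0^{\pi/2}\cos\theta\sin\theta\,\mathrm{d}\theta=\pi$ shows it equals $\pi\vec{y}$. Hence the second integral is $\tfrac{1}{2}\,\vec{y}\cdot\vec{v}$ and $p(b=+1|\vec{y},\vec{v})=\tfrac{1}{2}(1+\vec{y}\cdot\vec{v})$.

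Finally, $p(b=-1|\vec{y},\vec{v})=1-p(b=+1|\vec{y},\vec{v})=\tfrac{1}{2}(1-\vec{y}\cdot\vec{v})$, using that $\{\vec{y}\cdot\vec{\lambda}=0\}$ is a great circle of measure zero, so it does not matter whether it is counted towards $b=+1$ or $b=-1$. The equality $\tfrac{1}{2}(1\pm\vec{y}\cdot\vec{v})=|\braket{\pm\vec{y}}{\vec{v}}|^2$ follows from $|\braket{\pm\vec{y}}{\vec{v}}|^2=\Tr\!\bigl[\tfrac{1}{2}(\mathds{1}\pm\vec{y}\cdot\vec{\sigma})\,\tfrac{1}{2}(\mathds{1}+\vec{v}\cdot\vec{\sigma})\bigr]=\tfrac{1}{4}\bigl(\Tr[\mathds{1}]\pm\Tr[(\vec{y}\cdot\vec{\sigma})(\vec{v}\cdot\vec{\sigma})]\bigr)=\tfrac{1}{4}(2\pm2\,\vec{y}\cdot\vec{v})$, where I use $\Tr[\sigma_i]=0$ and $(\vec{a}\cdot\vec{\sigma})(\vec{b}\cdot\vec{\sigma})=(\vec{a}\cdot\vec{b})\mathds{1}+i(\vec{a}\times\vec{b})\cdot\vec{\sigma}$. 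None of these steps is deep; the only thing to be careful about is keeping the antipodal-symmetry and measure-zero arguments clean for the non-smooth integrands $\Theta(\cdot)$ and $\sgn(\cdot)$, which is the mildest of obstacles since all the exceptional sets involved are great circles.
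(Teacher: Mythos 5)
Your proof is correct, and it takes a genuinely different route from the paper's. The paper evaluates $\frac{1}{\pi}\int_{S_2} H(\vec{y}\cdot\vec{\lambda})\,\Theta(\vec{v}\cdot\vec{\lambda})\,\mathrm{d}\vec{\lambda}$ by fixing explicit coordinates ($\vec{y}=(0,1,0)^T$, $\vec{v}=(-\sin\beta,\cos\beta,0)^T$), identifying the lune $\beta\le\phi\le\pi$ as the support of the integrand, and computing the double integral $\frac{1}{\pi}\int_\beta^\pi\int_0^\pi\sin\phi\,\sin^2\theta\,\mathrm{d}\theta\,\mathrm{d}\phi=\frac{1}{2}(1+\cos\beta)$ directly. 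You instead decompose algebraically, $H=\frac{1}{2}(1+\sgn)$ and $\Theta(z)=\frac{1}{2}(z+|z|)$, then kill the odd piece by antipodal symmetry and reduce the surviving integral by rotational symmetry to the scalar $\int_{\{\vec{y}\cdot\vec{\lambda}>0\}}\vec{y}\cdot\vec{\lambda}\,\mathrm{d}\vec{\lambda}=\pi$. Your approach is essentially coordinate-free (you only drop into spherical coordinates for two trivial one-variable integrals) and makes the appearance of $\vec{y}\cdot\vec{v}$ structurally transparent rather than emerging at the end of a lune computation; the paper's approach is a shorter direct calculation once you accept the geometric identification of the support region. A nice side effect of your decomposition is that the normalization $\int_{S_2}\Theta(\vec{v}\cdot\vec{\lambda})/\pi\,\mathrm{d}\vec{\lambda}=1$, which the paper proves separately after the lemma, falls out naturally as the constant term. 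Your handling of the measure-zero great circles (where $\sgn$ and the Heaviside convention could in principle disagree) is also appropriately careful and matches the paper's implicit convention.
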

\begin{proof}
Bob outputs $b=+1$ if and only if $\vec{y}\cdot \vec{\lambda}\geq 0$. Therefore, the total probability that Bob outputs $b=+1$ becomes:
\begin{align}
    p(b=+1| \vec{y}, \vec{v})=\int_{S_2} H(\vec{y}\cdot \vec{\lambda})\cdot \ \rho(\vec{\lambda})\  \mathrm{d}\vec{\lambda}=\frac{1}{\pi}\int_{S_2} H(\vec{y}\cdot \vec{\lambda})\cdot \ \Theta(\vec{v}\cdot \vec{\lambda})\  \mathrm{d}\vec{\lambda} \, .
\end{align}
Here, $H(z)$ is the Heaviside function ($H(z)=1$ if $z\geq 0$ and $H(z)=0$ if $z< 0$) and $\Theta(z):=H(z)\cdot z$. The evaluation of the exact same integral is done in Ref.~\cite{Renner2022} (Lemma~1) and in similar forms also in Ref.~\cite{gisingisin1999, cerf2000, degorre2005}. For the sake of completeness, we restate the same proof as in Ref.~\cite{Renner2022} here:\\
"Note that both functions in the integral $H(\vec{y}\cdot \vec{\lambda})$ and $\Theta(\vec{v}\cdot \vec{\lambda})$ have support in only one half of the total sphere (the hemisphere centered around $\vec{y}$ and $\vec{v}$, respectively). For example, if $\vec{v}=-\vec{y}$ these two hemispheres are exactly opposite of each other and the integral becomes zero. For all other cases, we can observe that the value of the integral depends only on the angle between $\vec{y}$ and $\vec{v}$, because the whole expression is spherically symmetric. Therefore, it is enough to evaluate the integral for $\vec{y}=(0,1,0)^T$ and $\vec{v}=(-\sin{\beta},\cos{\beta},0)^T$, where we can choose without loss of generality $0\leq\beta\leq \pi$. Furthermore, we can use spherical coordinates for $\vec{\lambda}=(\sin{\theta}\cdot \cos{\phi}, \sin{\theta}\cdot \sin{\phi}, \cos{\theta})$ (note that $\vert\vec{\lambda}\vert =1$). With this choice of coordinates, the region in which both factors have non-zero support becomes exactly $\beta\leq \phi \leq \pi$ (at the same time, $\theta$ is unrestricted, $0 \leq \theta \leq \pi$). More precisely, $0\leq \phi \leq \pi$ is the support for $H(\vec{y}\cdot \vec{\lambda})$ and $\beta\leq \phi \leq \pi + \beta$ is the support for $\Theta(\vec{v}\cdot \vec{\lambda}$). In this way, the integral becomes:
\begin{align}
    \frac{1}{\pi}\int^{2\pi}_{0} \int^{\pi}_{0} H(\vec{y}\cdot \vec{\lambda})\cdot \ \Theta(\vec{v}\cdot \vec{\lambda}) \cdot \sin{\theta}\  \mathrm{d}\theta  \ \mathrm{d}\phi &=\frac{1}{\pi}\int^{\pi}_{\beta} \int^{\pi}_{0} \ \sin{\phi} \cdot \sin^2{\theta}\  \mathrm{d}\theta  \ \mathrm{d}\phi\\
    &=\frac{1}{2}(1+\cos{\beta})=\frac{1}{2}(1+\vec{y}\cdot \vec{v}) \, ."
\end{align}
Hence, $p(b=+1|\vec{y},\vec{v})=(1+\vec{y}\cdot \vec{v})/2$. Clearly, $p(b=-1| \vec{y}, \vec{v})=1-p(b=+1| \vec{y}, \vec{v})=(1-\vec{y}\cdot \vec{v})/2$.
\end{proof}

It appears several times in this work that $\int_{S_2} \Theta(\vec{\lambda}\cdot \vec{v})/\pi \ \mathrm{d}\vec{\lambda}=1$ for every normalized vector $\vec{v}\in S_2$. The proof follows by a similar calculation as in the above Lemma:
\begin{align}
    \frac{1}{\pi}\int_{S_2} \ \Theta(\vec{\lambda}\cdot \vec{v}) \ \mathrm{d}\vec{\lambda}=\frac{1}{\pi}\int_{S_2} \ H(\vec{\lambda}\cdot \vec{v})\cdot \Theta(\vec{\lambda}\cdot \vec{v}) \ \mathrm{d}\vec{\lambda}=\frac{1}{2}(1+\vec{v}\cdot \vec{v})=1 \, . \label{normalization}
\end{align}
The introduction of the Heaviside function in the second step clearly does not change the integral since $H(\vec{\lambda}\cdot \vec{v})$ has the same support as $\Theta(\vec{\lambda}\cdot \vec{v})$ or, more formally, $\Theta(\vec{\lambda}\cdot \vec{v}):=H(\vec{\lambda}\cdot \vec{v})\cdot (\vec{\lambda}\cdot \vec{v})=H(\vec{\lambda}\cdot \vec{v})^2\cdot (\vec{\lambda}\cdot \vec{v})=H(\vec{\lambda}\cdot \vec{v})\cdot \Theta(\vec{\lambda}\cdot \vec{v})$, where we used that $H(z)=H(z)^2$ for every $z\in \mathbb{R}$.

\section{Protocol for the maximally entangled qubit pair}\label{appendixb}
As mentioned in the main text, in the case of a maximally entangled state ($p=1/2$), there is a similar geometric argument that allows sampling the distributions $\rho_{\vec{x}}(\vec{\lambda})$ efficiently. More precisely, the two post-measurement states are always opposite of each other $\vec{v}_-=-\vec{v}_+$ and it holds that $p_+=p_-=1/2$. Therefore, the distribution $\rho_{\vec{x}}(\vec{\lambda})$ has, for every choice of Alice's measurement $\vec{x}$, the form (note that $\Theta(z)+\Theta(-z)=|z|$ for every $z\in \mathbb{R}$)
\begin{align}
    \rho_{\vec{x}}(\vec{\lambda})=\frac{1}{2\pi}\left(\Theta(\vec{\lambda}\cdot \vec{v}_+)+\ \Theta(\vec{\lambda}\cdot (-\vec{v}_+))\right)=\frac{1}{2\pi}|\vec{\lambda}\cdot \vec{v}_+| \, . \label{distribsinglet}
\end{align}
It was already observed by Degorre et al.~\cite{degorre2005} ("Theorem~6 (The “choice” method)"), that this distribution can be sampled by communicating only a single bit. This leads directly to a simulation of the maximally entangled state with one bit of communication. This is exactly the version of Degorre et al.~\cite{degorre2005} for the protocol of Toner and Bacon for the singlet (see also "Theorem 10 (Communication)" of Ref.~\cite{degorre2005}):

\begin{protocol}[$p=1/2$, 1 bit, from Ref.~\cite{degorre2005}] \label{protocoldegorre}
Same as Protocol~\ref{protocolgenfram} with the following 2. Step:\\
Alice and Bob share two normalized three-dimensional vectors $\vec{\lambda}_1,\vec{\lambda}_2 \in S_2$ that are independent and uniformly distributed on the unit sphere, $\rho(\vec{\lambda}_1)=\rho(\vec{\lambda}_2)=1/4\pi$. Alice sets $c=1$ if $|\vec{v}_+\cdot \vec{\lambda}_1|\geq |\vec{v}_+\cdot \vec{\lambda}_2|$ and $c=2$ otherwise. She communicates the bit $c$ to Bob and both set $\vec{\lambda}:=\vec{\lambda}_c$.\\

\end{protocol}
\begin{proof}
The proof can be found in Ref.~\cite{degorre2005}  ("Theorem~6 (The “choice” method)"). However, we want to give an independent proof here. We can focus first on the case where $\vec{v}_+=\vec{z}$. All other cases are analogous due to the spherical symmetry of the problem. In that case, we can write $\vec{\lambda}_1$ and $\vec{\lambda}_2$ in spherical coordinates, $\vec{\lambda}_i=(\sin{\theta_i}\cdot \cos{\phi_i}, \sin{\theta_i}\cdot \sin{\phi_i}, \cos{\theta_i})$. In this notation, Alice picks $\vec{\lambda}_1$ if and only if $|\vec{\lambda}_1\cdot \vec{z}|=|\cos{\theta_1}|\geq |\vec{\lambda}_2\cdot \vec{z}|=|\cos{\theta_2}|$. For a given $\vec{\lambda}_1$, this happens with probability
    \begin{align}
        \int_{S_2}\ H(|\vec{\lambda}_1\cdot \vec{z}|-|\vec{\lambda}_2\cdot \vec{z}|)\cdot \rho(\vec{\lambda}_2) \ \mathrm{d}\vec{\lambda}_2=&\frac{1}{4\pi}\ \int^{2\pi}_{0} \int^{\pi}_{0} H(|\cos{\theta_1}|-|\cos{\theta_2}|) \cdot \sin{\theta_2}\  \mathrm{d}\theta_2  \ \mathrm{d}\phi_2\\
        =&\frac{1}{2}\  \int^{\pi}_{0} H(|\cos{\theta_1}|-|\cos{\theta_2}|) \cdot \sin{\theta_2}\  \mathrm{d}\theta_2
    \end{align}
If $0\leq \theta_1\leq \pi/2$ and hence $\cos{\theta_1}\geq 0$, the region where $H(|\cos{\theta_1}|-|\cos{\theta_2}|)=1$ becomes exactly $\theta_1\leq \theta_2 \leq \pi- \theta_1$ and hence the above integral becomes:
\begin{align}
        \frac{1}{2}\  \int^{\pi-\theta_1}_{\theta_1} \sin{\theta_2}\  \mathrm{d}\theta_2
        =\frac{1}{2}\  \left[ -\cos{\theta_2} \right]^{\pi-\theta_1}_{\theta_1}=\left( -\cos{(\pi-\theta_1)}+\cos{(\theta_1)} \right)/2=\cos{\theta_1}=|\vec{z}\cdot \vec{\lambda}_1| \, .
\end{align}
For $\pi/2\leq \theta_1\leq \pi$ and hence $\cos{\theta_1}\leq 0$ a similar calculation leads to $-\cos{\theta_1}=|\cos{\theta_1}|=|\vec{z}\cdot \vec{\lambda}_1|$. (One can also observe that the integral depends only on $|\cos{\theta_1}|$, which leads to the same statement.) Hence, whenever Alice chooses $\vec{\lambda}:=\vec{\lambda}_1$ the distribution of that vector becomes $\rho(\vec{\lambda}_1)\cdot|\vec{z}\cdot \vec{\lambda}_1|=|\vec{z}\cdot \vec{\lambda}_1|/(4\pi)$. Analogously, whenever Alice chooses $\vec{\lambda}:=\vec{\lambda}_2$ the distribution of that chosen vector is again $|\vec{z}\cdot \vec{\lambda}_2|/(4\pi)$, due to the symmetric roles of $\vec{\lambda}_1$ and $\vec{\lambda}_2$. Hence, the distribution of the chosen vector $\vec{\lambda}$ becomes, in total, the sum of these two terms $\rho(\vec{\lambda})=|\vec{z}\cdot \vec{\lambda}|/(2\pi)$. For a general vector $\vec{v}_+$, the analog expression $\rho(\vec{\lambda})=|\vec{v}_+\cdot \vec{\lambda}|/(2\pi)$ holds, because of the spherical symmetry of the protocol.
\end{proof}

We also want to remark here, that in the case of a maximally entangled state, Alice's response function in the third step Eq.~\eqref{aliceoutput} can be, due to Eq.~\eqref{distribsinglet}, rewritten into $p_A(a=\pm 1|\vec{x},\vec{\lambda})=H(\vec{\lambda}\cdot \vec{v}_+)$, or, equivalently, $a=\sgn(\vec{\lambda}\cdot \vec{v}_+)$.

\subsection{"Classical teleportation" protocol}
With this observation, we can also understand the classical teleportation protocol from the main text. To avoid confusion, this protocol is \emph{not} of the form given in Protocol~\ref{protocolgenfram}:\\

\begin{protocol}\label{protocol4}
The following protocol simulates a qubit in a prepare-and-measure scenario:
\begin{enumerate}
    \item Alice chooses the quantum state $\ketbra{v}=(\mathds{1}+\vec{v}\cdot \vec{\sigma})/2$ she wants to send to Bob.
    \item Alice and Bob share two normalized three-dimensional vectors $\vec{\lambda}_1,\vec{\lambda}_2 \in S_2$ that are independent and uniformly distributed on the unit sphere, $\rho(\vec{\lambda}_1)=\rho(\vec{\lambda}_2)=1/4\pi$. Alice sets $c_1=1$ if $|\vec{v}\cdot \vec{\lambda}_1|\geq |\vec{v}\cdot \vec{\lambda}_2|$ and $c_1=2$ otherwise. In addition, Alice defines a second bit $c_2=\sgn(\vec{\lambda}_{c_1}\cdot \vec{v})$. She communicates the two bits $c_1$ and $c_2$ to Bob and both set $\vec{\lambda}:=c_2\ \vec{\lambda}_{c_1}$.
    \item Bob outputs $b=\sgn(\vec{y}\cdot \vec{\lambda})$.
\end{enumerate}
\end{protocol}
\begin{proof}
    As a result of the above (Protocol~\ref{protocoldegorre}), the distribution of the vector $\vec{\lambda}_{c_1}$ is $\rho(\vec{\lambda}_{c_1})=|\vec{\lambda}_{c_1}\cdot \vec{v}|/(2\pi)$. When he defines  $\vec{\lambda}:=c_2\ \vec{\lambda}_{c_1}$, he exactly flips the vector $\vec{\lambda}_{c_1}$ if and only if $\vec{\lambda}_{c_1}\cdot \vec{v}< 0$. With the additional flip, he obtains the distribution:
\begin{align}
    \frac{1}{2\pi}|\vec{\lambda}_{c_1}\cdot \vec{v}|=\frac{1}{2\pi}\left(\Theta(\vec{\lambda}_{c_1}\cdot \vec{v})+\ \Theta(\vec{\lambda}_{c_1}\cdot (-\vec{v}))\right)
    \xlongrightarrow{flip}\frac{1}{2\pi}\left(\Theta(\vec{\lambda}\cdot \vec{v})+\ \Theta((-\vec{\lambda})\cdot (-\vec{v}))\right)=\frac{1}{\pi}\Theta(\vec{\lambda}\cdot \vec{v}) \, .
\end{align}
Therefore, the distribution of the vector $\vec{\lambda}$ becomes $\Theta(\vec{v} \cdot \vec{\lambda})/\pi$. In this way, Alice managed to send exactly a classical description of the state $\ketbra*{\vec{v}}$ to Bob. More precisely, Lemma~\ref{lemma1} ensures that Bob outputs according to $p(b=\pm 1|\vec{v},\vec{y})=(1\pm \vec{y}\cdot \vec{v})/2$, as required by quantum mechanics.
\end{proof}

Note that, in the main text, the second step is formulated as follows: "Alice and Bob share four normalized three-dimensional vectors $\vec{\lambda}_1,\vec{\lambda}_2, \vec{\lambda}_3, \vec{\lambda}_4\in S_2$. The first two $\vec{\lambda}_1$ and $\vec{\lambda}_2$ are uniformly and independently distributed on the sphere, whereas $\vec{\lambda}_3=-\vec{\lambda}_1$ and $\vec{\lambda}_4=-\vec{\lambda}_2$. From these four vectors, Alice chooses the one that maximizes $\vec{\lambda}_i\cdot \vec{v}$ and communicates the result to Bob and both set $\vec{\lambda}:=\vec{\lambda}_i$."

This is just a reformulation of the second step in Protocol~\ref{protocol4} and both versions are equivalent. To see this, fix $\vec{\lambda}_1$ and $\vec{\lambda}_2$. If $|\vec{v}\cdot \vec{\lambda}_1|\geq |\vec{v}\cdot \vec{\lambda}_2|$ and $\vec{v}\cdot \vec{\lambda}_1\geq 0$, Alice will send $c_1=1$ and $c_2=+1$ and both set $\vec{\lambda}:=c_2\ \vec{\lambda}_{c_1}=\vec{\lambda}_1$ in step two of Protocol~\ref{protocol4}. In the reformulation, it turns out that the vector that maximizes $\vec{v}\cdot \vec{\lambda}_i$ is precisely $\vec{\lambda}_1$ since $|\vec{v}\cdot \vec{\lambda}_1|\geq |\vec{v}\cdot \vec{\lambda}_2|$ and $\vec{v}\cdot \vec{\lambda}_1\geq 0$ imply that $\vec{v}\cdot \vec{\lambda}_1\geq \vec{v}\cdot \vec{\lambda}_2$; $\vec{v}\cdot \vec{\lambda}_1\geq \vec{v}\cdot \vec{\lambda}_3=-\vec{v}\cdot \vec{\lambda}_1$ as well as $\vec{v}\cdot \vec{\lambda}_1\geq \vec{v}\cdot \vec{\lambda}_4=-\vec{v}\cdot \vec{\lambda}_2$. With similar arguments, one can check that, for a fixed $\vec{\lambda}_1$ and $\vec{\lambda}_2$, they always choose the same vector $\vec{\lambda}$ in both versions.

\section{Properties of $\tilde{\rho}_{\vec{x}}(\vec{\lambda})$} \label{rhotilde}

\begin{figure}[hbt]
    \centering
    \includegraphics[width=0.95 \textwidth]{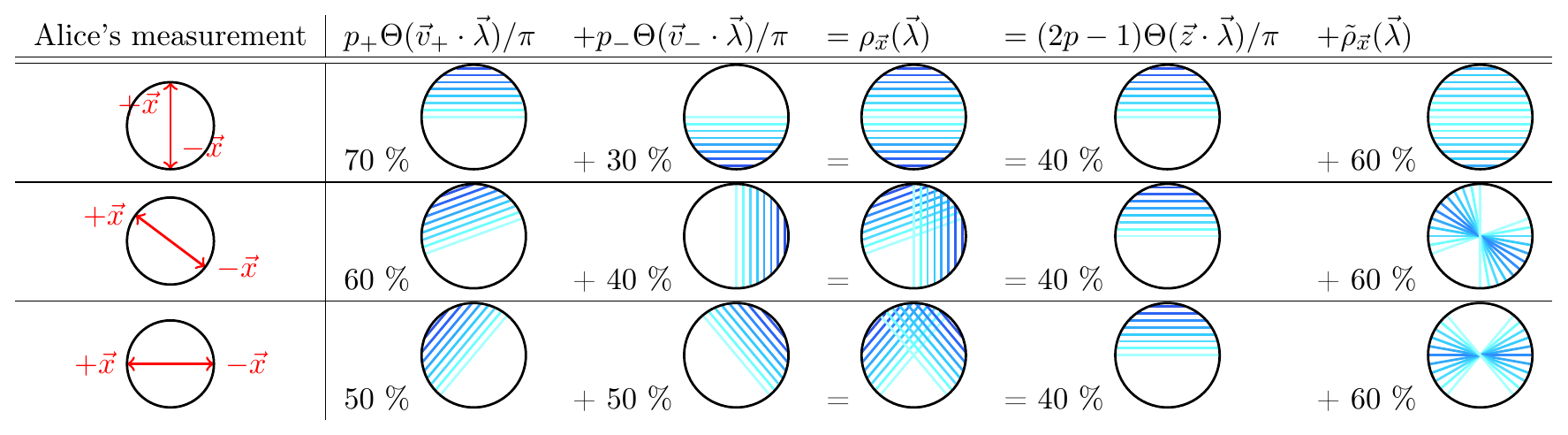}
    \caption{A sketch of the relevant distributions for the state $\ket{\Psi_{AB}}=\sqrt{0.7}\ket{00}+\sqrt{0.3}\ket{11}$: In the previous approach, Alice tosses a biased coin according to the marginals of her measurement. Then she uses the classical teleportation protocol to create on Bob's side a vector according to the distribution $\Theta (\vec{v}_a\cdot \vec{\lambda})/\pi$, from which Bob can reproduce quantum statistics (see Lemma~\ref{lemma1}). However, it turns out that it is enough to sample only the sum of these two distributions $\rho_{\vec{x}}(\vec{\lambda})$. Since all of these distributions (for every $\vec{x}$) can be rewritten into a constant part and the extra term $\tilde{\rho}_{\vec{x}}(\vec{\lambda})$, one can sample all of these distributions by communicating only a classical trit. If the state is very weakly entangled ($p\to 1$), it turns out that the constant part with weight $2p-1$ dominates and already one bit is sufficient to sample all the distributions $\rho_{\vec{x}}(\vec{\lambda})$.}
    \label{fig2}
\end{figure}

In this section, we prove several properties of the distribution $\tilde{\rho}_{\vec{x}}(\vec{\lambda})$ that are crucial for our protocols. Let us recall that,
\begin{align}
    \tilde{\rho}_{\vec{x}}(\vec{\lambda}):=&\frac{1}{\pi}\left( p_+\ \Theta(\vec{\lambda}\cdot \vec{v}_+)+p_- \ \Theta(\vec{\lambda}\cdot \vec{v}_-)-(2p-1)\ \Theta(\vec{\lambda}\cdot \vec{z})\right)
\end{align}
where, $0\leq p_\pm \leq 1$, $p_++p_-=1$ and $p\geq0.5$ (therefore $0\leq (2p-1)\leq 1$) and all vectors are normalized vectors on the Bloch sphere $\vec{\lambda},\vec{v}_+,\vec{v}_-,\vec{z}\in S_2$. The only relevant equation that we need for the proof is Eq.~\eqref{nonsignalling}, which reads as 
\begin{equation}
    p_+\ \vec{v}_++p_-\ \vec{v}_-=(2p-1)\ \vec{z},
\end{equation} and the definition of $\Theta(z)$:
\begin{align}
\Theta(z):=\biggl\lbrace \begin{array}{cc}
       z & \text{if } z\geq 0 \\
        0 & \text{if } z<0
    \end{array}\, .
\end{align}

\begin{lemma}
The Distribution $\tilde{\rho}_{\vec{x}}(\vec{\lambda})$ defined above satisfies the following properties:
\begin{enumerate}[label=(\roman*)]
    \item positive: $\tilde{\rho}_{\vec{x}}(\vec{\lambda})\geq 0$
    \item symmetric: $\tilde{\rho}_{\vec{x}}(\vec{\lambda})=\tilde{\rho}_{\vec{x}}(-\vec{\lambda})$
    \item area: $\int_{S_2} \tilde{\rho}_{\vec{x}}(\vec{\lambda}) \ \mathrm{d}\vec{\lambda}=2(1-p)$
    \item 1st bound: $\tilde{\rho}_{\vec{x}}(\vec{\lambda})\leq \frac{p_\pm}{\pi} |\vec{\lambda}\cdot \vec{v}_\pm|$
    \item 2nd bound: $\tilde{\rho}_{\vec{x}}(\vec{\lambda})\leq \frac{1}{2\pi}\frac{1-C^2}{\sqrt{1-C^2\sin^2{(\theta)}}+C|\cos{(\theta)}|}$ with $C:=2p-1$ and $\cos{(\theta)}=\vec{\lambda}\cdot \vec{z}$
    \item 3rd bound: $\tilde{\rho}_{\vec{x}}(\vec{\lambda})\leq \frac{\sqrt{p(1-p)}}{\pi}$
\end{enumerate}
\end{lemma}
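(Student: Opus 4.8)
The plan is to reduce everything to an ``absolute‑value form'' of $\tilde\rho_{\vec x}$ and then read off (i)--(iv) from the triangle inequality, keeping the only real analysis for (v). Writing $\Theta(z)=\tfrac12(z+|z|)$ in the definition of $\tilde\rho_{\vec x}(\vec\lambda)$ and collecting the terms linear in $\vec\lambda$, their coefficient is $p_+\vec v_+ + p_-\vec v_- - (2p-1)\vec z$, which vanishes by the non‑signalling identity Eq.~\eqref{nonsignalling}. This gives
\[
\tilde\rho_{\vec x}(\vec\lambda)=\frac{1}{2\pi}\Big(p_+\,|\vec\lambda\cdot\vec v_+|+p_-\,|\vec\lambda\cdot\vec v_-|-(2p-1)\,|\vec\lambda\cdot\vec z|\Big).
\]
Property (ii) is then immediate. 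For (iii) I integrate term by term using $\int_{S_2}\Theta(\vec\lambda\cdot\vec v)/\pi\,\mathrm d\vec\lambda=1$ from Eq.~\eqref{normalization}, obtaining $p_++p_--(2p-1)=2(1-p)$. For (i) I apply the triangle inequality to $(2p-1)|\vec\lambda\cdot\vec z|=|\vec\lambda\cdot(p_+\vec v_++p_-\vec v_-)|\le p_+|\vec\lambda\cdot\vec v_+|+p_-|\vec\lambda\cdot\vec v_-|$; and for (iv), applying it instead to $p_-\vec v_-=(2p-1)\vec z-p_+\vec v_+$ (and symmetrically with the roles of $\pm$ exchanged) yields $p_-|\vec\lambda\cdot\vec v_-|\le(2p-1)|\vec\lambda\cdot\vec z|+p_+|\vec\lambda\cdot\vec v_+|$, which rearranges to the claimed bound.

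The substance is (v), from which (vi) will follow. Put $C:=2p-1\in[0,1]$ and $S:=\sqrt{1-C^2\sin^2\theta}$ with $\cos\theta=\vec\lambda\cdot\vec z$. Since $S^2-C^2\cos^2\theta=1-C^2$, the bound in (v) is equivalent to $\tilde\rho_{\vec x}(\vec\lambda)\le\frac{1}{2\pi}\big(S-C|\cos\theta|\big)$, i.e.\ to $p_+|\vec\lambda\cdot\vec v_+|+p_-|\vec\lambda\cdot\vec v_-|\le S$. I would split on the relative sign of $\vec\lambda\cdot\vec v_+$ and $\vec\lambda\cdot\vec v_-$. If they agree, the left-hand side equals $|\vec\lambda\cdot(p_+\vec v_++p_-\vec v_-)|=C|\cos\theta|\le S$ because $C^2\le1$. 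If they disagree, say $\vec\lambda\cdot\vec v_+\ge0\ge\vec\lambda\cdot\vec v_-$ (without loss of generality, by the $\pm$‑symmetry of the statement), then using $p_+(\vec\lambda\cdot\vec v_+)+p_-(\vec\lambda\cdot\vec v_-)=C\cos\theta$ the left-hand side becomes $2p_+(\vec\lambda\cdot\vec v_+)-C\cos\theta$.

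To control this remaining expression I parametrize $\vec v_+$ by its polar angle $\gamma\in[0,\pi]$, so $\vec v_+\cdot\vec z=\cos\gamma$; projecting $\vec v_+$ onto the plane through $\vec z$ and $\vec\lambda$ and bounding the azimuthal factor by $1$ gives $\vec\lambda\cdot\vec v_+\le\cos\theta\cos\gamma+\sin\theta\sin\gamma=\cos(\theta-\gamma)$, while $|\vec v_-|=1$ together with $p_-=1-p_+$ and the $\vec z$‑component of Eq.~\eqref{nonsignalling} forces $p_+=\frac{1-C^2}{2(1-C\cos\gamma)}$. Hence the left-hand side is at most $g(\gamma):=\frac{(1-C^2)\cos(\theta-\gamma)}{1-C\cos\gamma}-C\cos\theta$, and a short derivative computation shows its only interior critical point obeys $\sin(\theta-\gamma)=C\sin\theta$, at which $g=S$ (using $\frac{1-C^2}{S-C\cos\theta}=S+C\cos\theta$, again from $S^2-C^2\cos^2\theta=1-C^2$); the endpoints give $g(0)=\cos\theta$ and $g(\pi)=-\cos\theta$, both $\le S$ since $\cos^2\theta+C^2\sin^2\theta\le1$. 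Thus $g\le S$ on $[0,\pi]$ and (v) holds. Finally (vi) drops out of the un‑rationalized form of (v): $\tilde\rho_{\vec x}(\vec\lambda)\le\frac{1}{2\pi}\,\frac{1-C^2}{S+C|\cos\theta|}\le\frac{1}{2\pi}\,\frac{1-C^2}{\sqrt{1-C^2}}=\frac{\sqrt{1-C^2}}{2\pi}=\frac{\sqrt{p(1-p)}}{\pi}$, because $1-C^2=4p(1-p)$ and $S\ge\sqrt{1-C^2}$.

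The step I expect to be the genuine obstacle is the optimization giving $g\le S$ in (v): although it is ``only'' one‑variable calculus, one has to find the substitution $\vec v_+\cdot\vec z=\cos\gamma$ (which turns the awkward $\cos\theta\,\mu+\sin\theta\sqrt{1-\mu^2}$ into $\cos(\theta-\gamma)$) and then notice that the critical value collapses exactly to $S$, so the bound is tight. One should also check the degenerate case $p=1$ (i.e.\ $C=1$), where $\vec v_+=\vec v_-=\vec z$ and every inequality holds trivially with both sides equal to zero.
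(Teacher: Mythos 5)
Your proof is correct, and it follows a cleaner route than the paper for properties (i)--(iv). The key move you make that the paper does not is to rewrite $\Theta(z)=\tfrac12(z+|z|)$ so that the non-signalling identity kills the linear part, giving the absolute-value form
\begin{equation}
\tilde\rho_{\vec x}(\vec\lambda)=\frac{1}{2\pi}\Bigl(p_+\,|\vec\lambda\cdot\vec v_+|+p_-\,|\vec\lambda\cdot\vec v_-|-(2p-1)\,|\vec\lambda\cdot\vec z|\Bigr).
\end{equation}
From here (ii) is immediate and (i), (iv) are one-line triangle inequalities. The paper instead works directly with the subadditivity $\Theta(a)+\Theta(b)\ge\Theta(a+b)$ (choosing $a,b$ cleverly in each case) and uses $\Theta(a)-\Theta(-a)=a$ for (ii). The two are mathematically equivalent, but your reformulation unifies (i), (ii) and (iv) under one familiar inequality and makes the structure transparent; that is a genuine improvement in presentation. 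For (iii) you do the same term-by-term integration. For (vi) your direct estimate $S+C|\cos\theta|\ge\sqrt{1-C^2}$ is a touch slicker than the paper's ``maximize (v) over $\theta$.''

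For (v), you are essentially doing the same one-parameter optimization as the paper. The paper restricts to the lower hemisphere (where $\tilde\rho_{\vec x}=\rho_{\vec x}$ and only one $\Theta$-term survives) and recovers the upper hemisphere by symmetry (ii); you instead split on whether $\vec\lambda\cdot\vec v_+$ and $\vec\lambda\cdot\vec v_-$ have the same or opposite sign, which amounts to the same dichotomy. Both of you then parametrize $\vec v_+$ by its polar angle $\gamma$, deduce $p_+=\tfrac{1-C^2}{2(1-C\cos\gamma)}$ from $|\vec v_-|=1$, arrive at the critical-point condition $\sin(\theta-\gamma)=C\sin\theta$, and observe that the critical value collapses exactly to $S-C|\cos\theta|$. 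One small imprecision: you assert ``its only interior critical point,'' but for $\theta$ close to $\pi$ the equation $\sin(\theta-\gamma)=C\sin\theta$ can have a second solution in $[0,\pi]$, namely $\theta-\gamma=\pi-\arcsin(C\sin\theta)$. This is harmless --- there $\cos(\theta-\gamma)=-S$, $1-C\cos\gamma=S(S+C\cos\theta)$, and one finds $g=-S\le S$, so the global maximum is still $S$ --- but the phrasing should acknowledge that both critical values (and both endpoint values $\pm\cos\theta$) need to be checked against $S$. The paper has the same implicit loose end, so this is a shared caveat rather than a gap specific to your write-up.
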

\begin{proof}
Most of these properties follow directly from the fact that the function $\Theta(a)$ is convex and satisfies:
\begin{align}
    \forall a,b\in \mathbb{R}:\ \Theta(a)+\Theta(b)\geq \Theta(a+b) \, .
\end{align}
Furthermore, we use the following property frequently:
\begin{align}
    \forall a,b\in \mathbb{R} \text{ with } a\geq 0:\ \Theta(a\cdot b)= a\cdot \Theta(b) \, .
\end{align}
Furthermore, $\Theta(a)+\Theta(-a)=|a|$ as well as $\Theta(a)-\Theta(-a)=a$ for all $ a\in \mathbb{R}$. All of these properties follow directly from the definition of $\Theta(a)$.\\

\textit{(i) positive:}\\
We can use $\Theta(a)+\Theta(b)\geq \Theta(a+b)$ with $a= p_+\ \vec{\lambda}\cdot \vec{v}_+$ and $b= p_-\ \vec{\lambda}\cdot \vec{v}_-$. As a consequence of $p_+\ \vec{v}_++p_-\ \vec{v}_-=(2p-1)\ \vec{z}$ we obtain $a+b=p_+\ \vec{\lambda}\cdot \vec{v}_++p_-\ \vec{\lambda}\cdot \vec{v}_-=\vec{\lambda}\cdot(p_+\ \vec{v}_++p_-\ \vec{v}_-)=(2p-1)\ \vec{\lambda}\cdot\vec{z}$ and therefore:
\begin{align}
    \Theta(p_+\ \vec{\lambda}\cdot \vec{v}_+)+\Theta(p_-\ \vec{\lambda}\cdot \vec{v}_-)&\geq \Theta((2p-1)\ \vec{\lambda}\cdot\vec{z})\\
    p_+\ \Theta(\vec{\lambda}\cdot \vec{v}_+)+p_-\ \Theta(\vec{\lambda}\cdot \vec{v}_-)&\geq (2p-1)\ \Theta(\vec{\lambda}\cdot\vec{z})\\
    \tilde{\rho}_{\vec{x}}(\vec{\lambda})&\geq 0 \, .
\end{align}
Note that we have used $p_+,p_-\geq 0$ and $(2p-1)\geq 0$.\\

\textit{(ii) symmetric:}\\
From $\Theta(a)-\Theta(-a)=a$ we conclude:
\begin{align}
    a+b&=(a+b)\\
    \Theta(a)-\Theta(-a)+\Theta(b)-\Theta(-b)&=\Theta(a+b)-\Theta(-a-b)\\
    \Theta(a)+\Theta(b)-\Theta(a+b)&=\Theta(-a)+\Theta(-b)-\Theta(-a-b) \, .
\end{align}\\
Now we can choose $a= p_+\ \vec{\lambda}\cdot \vec{v}_+$ and $b= p_-\ \vec{\lambda}\cdot \vec{v}_-$ such that $a+b=p_+\ \vec{\lambda}\cdot \vec{v}_++p_-\ \vec{\lambda}\cdot \vec{v}_-=(2p-1)\ \vec{\lambda}\cdot\vec{z}$. We obtain directly:
\begin{align}
    \tilde{\rho}_{\vec{x}}(\vec{\lambda})=\frac{1}{\pi}(\Theta(a)+\Theta(b)-\Theta(a+b))=\frac{1}{\pi}(\Theta(-a)+\Theta(-b)-\Theta(-a-b))=\tilde{\rho}_{\vec{x}}(-\vec{\lambda}) \, .
\end{align}\\

\textit{(iii) area:}\\
Since $\int_{S_2} \Theta(\vec{\lambda}\cdot \vec{v})/\pi \ \mathrm{d}\vec{\lambda}=1$ (see Eq.~\eqref{normalization}) we obtain by linearity:
\begin{align}
    \int_{S_2} \tilde{\rho}_{\vec{x}}(\vec{\lambda}) \ \mathrm{d}\vec{\lambda}=&\int_{S_2} \frac{p_+}{\pi}\ \Theta(\vec{\lambda}\cdot \vec{v}_+) \ \mathrm{d}\vec{\lambda}+\int_{S_2} \frac{p_-}{\pi}\ \Theta(\vec{\lambda}\cdot \vec{v}_-) \ \mathrm{d}\vec{\lambda}-\int_{S_2} \frac{(2p-1)}{\pi}\ \Theta(\vec{\lambda}\cdot \vec{z}) \ \mathrm{d}\vec{\lambda}\\
    =&p_++p_--(2p-1)=1-(2p-1)=2(1-p) \, .
\end{align}\\

\textit{(iv) 1st bound:}\\
We can use $\Theta(a)+\Theta(b)\geq \Theta(a+b)$ with $a= p_+\ \vec{\lambda}\cdot \vec{v}_++p_-\ \vec{\lambda}\cdot \vec{v}_-=(2p-1)\ \vec{\lambda}\cdot\vec{z}$ and $b=- p_-\ \vec{\lambda}\cdot \vec{v}_-$. We obtain $a+b=p_+\ \vec{\lambda}\cdot \vec{v}_+$ and therefore:
\begin{align}
    \Theta((2p-1)\ \vec{\lambda}\cdot\vec{z})+\Theta(- p_-\ \vec{\lambda}\cdot \vec{v}_-)&\geq \Theta(p_+\ \vec{\lambda}\cdot \vec{v}_+)\\
    \Theta((2p-1)\ \vec{\lambda}\cdot\vec{z})+\Theta(- p_-\ \vec{\lambda}\cdot \vec{v}_-)+\Theta(p_-\ \vec{\lambda}\cdot \vec{v}_-)&\geq \Theta(p_+\ \vec{\lambda}\cdot \vec{v}_+)+\Theta(p_-\ \vec{\lambda}\cdot \vec{v}_-)\\
    p_-\ \Theta(- \vec{\lambda}\cdot \vec{v}_-)+p_-\ \Theta(\vec{\lambda}\cdot \vec{v}_-)&\geq p_+\ \Theta(\vec{\lambda}\cdot \vec{v}_+)+p_-\ \Theta(\vec{\lambda}\cdot \vec{v}_-)-(2p-1)\ \Theta(\vec{\lambda}\cdot\vec{z})\\
    p_-\ |\vec{\lambda}\cdot \vec{v}_-|&\geq \pi \cdot \tilde{\rho}_{\vec{x}}(\vec{\lambda}) \, .
\end{align}
If we choose $b=- p_+\ \vec{\lambda}\cdot \vec{v}_+$ instead, we obtain $p_+\ |\vec{\lambda}\cdot \vec{v}_+|\geq \pi \cdot \tilde{\rho}_{\vec{x}}(\vec{\lambda})$.\\

\textit{(v) 2nd bound:}\\
Here we prove:
\begin{align}
    \tilde{\rho}_{\vec{x}}(\vec{\lambda})\leq \frac{1}{2\pi}\frac{1-C^2}{\sqrt{1-C^2\sin^2{(\theta)}}+C|\cos{(\theta)}|}
\end{align}
where $C:=2p-1$ and $\cos{(\theta)}=\vec{\lambda}\cdot \vec{z}$.

We prove it in the following way: For a given vector, $\vec{\lambda} \in S_2$ we want to find the distribution $\tilde{\rho}_{\vec{x}}$ for which $\tilde{\rho}_{\vec{x}}(\vec{\lambda})$ is maximal. First, we focus on a vector $\vec{\lambda}$ in the lower hemisphere ($\vec{\lambda}\cdot \vec{z}\leq 0$). In that region, it turns out that $\tilde{\rho}_{\vec{x}}(\vec{\lambda})=\frac{1}{\pi}(p_+ \Theta(\vec{v}_+\cdot \vec{\lambda})+p_- \Theta(\vec{v}_-\cdot \vec{\lambda}))=\rho_{\vec{x}}(\vec{\lambda})$ and furthermore, only one of the two terms ($p_+ \vec{v}_+\cdot \vec{\lambda}$ or $p_- \vec{v}_-\cdot \vec{\lambda}$) is positive (if both are positive, we have $p_+ \vec{v}_+\cdot \vec{\lambda}+p_- \vec{v}_-\cdot \vec{\lambda}=C\ \vec{z}\cdot \vec{\lambda}>0$ which is a contradiction). Therefore, we want to find, for a given vector $\vec{\lambda}$, $\vec{v}_+$ and $p_+$ such that  $\rho_{\vec{x}}(\vec{\lambda})=\frac{1}{\pi}(p_+ \vec{v}_+\cdot \vec{\lambda})$ is maximal (we choose "+" w.l.o.g.).

For what follows, we choose the following parametrization: $\vec{v}_\pm=(\sin{(\alpha_\pm)}, 0, \cos{(\alpha_\pm)})^T$, $\vec{\lambda}=(\sin{(\theta)}, 0, \cos{(\theta)})^T$. Note that the maximum is the same for two different $\vec{\lambda}$ with the same $z$-component due to the rotational symmetry around the $z$-axis. This allows us to focus only on the particular choice of $\vec{\lambda}=(\sin{(\theta)}, 0, \cos{(\theta)})^T$ where we set the $y$-component to zero. Then the vector $\vec{v}_+$ that achieves the maximum will have zero $y$-component as well since we want to maximize the inner product between these two vectors. Solving the equation $p_+ \vec{v}_++p_- \vec{v}_-=C\ \vec{z}$ together with $p_++p_-=1$ leads to:
\begin{align}
    p_+=\frac{1-C^2}{2-2C\cos{(\alpha_+)}}\, , &&\sin{(\alpha_-)}=\frac{(1-C^2)\sin{(\alpha_+)}}{2C\cos{(\alpha_+)}-(1+C^2)}\, , &&\cos{(\alpha_-)}=\frac{(1+C^2)\cos{(\alpha_+)}-2C}{2C\cos{(\alpha_+)}-(1+C^2)} \, .
\end{align}
Here, $\sin{(\alpha_-)}$ and $\cos{(\alpha_-)}$ are stated merely for completeness and are not necessary for what follows. In order to maximize $\frac{1}{\pi}(p_+ \vec{v}_+\cdot \vec{\lambda})$, we have to find the maximal $\alpha_+$ for the function:
\begin{align}
    \frac{1}{\pi}(p_+ \vec{v}_+\cdot \vec{\lambda})=\frac{(1-C^2)\cos{(\theta-\alpha_+)}}{2\pi(1-C\cos{(\alpha_+)})} \, .
\end{align}
Maximizing over $\alpha_+$ leads to the condition $C\sin{(\theta)}=\sin{(\theta -\alpha_+)}$ or $\alpha_+=\theta-\arcsin{(C\sin{(\theta)})}$. For the two expressions in the above function that contain $\alpha_+$, we obtain $\cos{(\alpha_+)}=\cos{(\theta)}\sqrt{1-C^2\sin^2{(\theta)}}+C\sin^2{(\theta)}$ and $\cos{(\theta -\alpha_+)}=\sqrt{1-C^2\sin^2{(\theta)}}$. This leads to the following bound:
\begin{align}
    \frac{1}{\pi}(p_+ \vec{v}_+\cdot \vec{\lambda}) \leq \frac{1}{2\pi}\frac{1-C^2}{\sqrt{1-C^2\sin^2{(\theta)}}-C\cos{(\theta)}}=\frac{1}{2\pi}\frac{1-C^2}{\sqrt{1-C^2\sin^2{(\theta)}}+C|\cos{(\theta)}|} \, .
\end{align}
In the second step, we used that $\cos{(\theta)}\leq 0$ and therefore $\cos{(\theta)}=-|\cos{(\theta)}|$. For a vector in the upper hemisphere, we simply observe that the function $\tilde{\rho}_{\vec{x}}(\vec{\lambda})$ is symmetric $\tilde{\rho}_{\vec{x}}(\vec{\lambda})=\tilde{\rho}_{\vec{x}}(-\vec{\lambda})$. Since $-\vec{\lambda}=(-\sin{(\theta)}, 0, -\cos{(\theta)})^T$ this leads directly to
\begin{align}
    \tilde{\rho}_{\vec{x}}(\vec{\lambda})\leq \frac{1}{2\pi}\frac{1-C^2}{\sqrt{1-C^2\sin^2{(\theta)}}+C|\cos{(\theta)}|}\, ,
\end{align}
since the bound is invariant under changing the sign of $\sin{(\theta)}$ and $\cos{(\theta)}$.\\

\textit{(vi) 3rd bound:}\\
We maximize the 2nd bound over $\theta$. The maximum is reached when $\theta=\pi/2$, which leads to:
\begin{align}
    \tilde{\rho}_{\vec{x}}(\vec{\lambda})\leq \frac{\sqrt{1-C^2}}{2\pi}=\frac{\sqrt{p(1-p)}}{\pi} \, .
\end{align}
Note, that this bound is strictly weaker than the second bound but easier to state and useful for pedagogical reasons.
\end{proof}

\section{Improved one bit protocol}\label{improvedprot}
We can improve the protocol from the main text in two independent ways. The first improvement comes from the fact that $p_A(c=1|\vec{x},\vec{\lambda}_1)=(4\pi)\cdot \tilde{\rho}_{\vec{x}} (\vec{\lambda}_1)\leq 4 \sqrt{p(1-p)}$, where we used the bound $\tilde{\rho}_{\vec{x}}(\vec{\lambda})\leq \sqrt{p(1-p)}/\pi$. Hence, if the state is very weakly entangled ($p\lesssim 1$), the probability that Alice sends the bit $c=1$ is always small. Intuitively speaking, this allows us to rewrite the protocol into a form where Alice and Bob only communicate in a fraction of rounds, but in those rounds with a higher (rescaled) probability $p_A(c=1|\vec{x},\vec{\lambda}_1)\propto \tilde{\rho}_{\vec{x}} (\vec{\lambda}_1)$. In the limit where $p$ approaches one (the separable state $\ketbra{00}$), the fraction of rounds in which they have to communicate at all approaches even zero. The second improvement comes from using a better bound for the function $\tilde{\rho}_{\vec{x}}(\vec{\lambda})$. Indeed, the bound $\tilde{\rho}_{\vec{x}}(\vec{\lambda})\leq \sqrt{p(1-p)}/\pi$ is easy to state but not optimal. More precisely, we have proven in Appendix~\ref{rhotilde}:
\begin{align}
    0\leq \tilde{\rho}_{\vec{x}}(\vec{\lambda}) \leq \tilde{\rho}_{max}(\vec{\lambda}):=\frac{1}{2\pi}\frac{1-C^2}{\sqrt{1-C^2\sin^2{(\theta)}}+C|\cos{(\theta)}|} \, .
\end{align}
Here, $C:=2p-1$ and $\cos{(\theta)}=\vec{\lambda}\cdot \vec{z}$ is the $z$ component of $\vec{\lambda}$ in spherical coordinates. Note that, neither $\tilde{\rho}_{\vec{x}}(\vec{\lambda})$ nor $\tilde{\rho}_{max}(\vec{\lambda})$ are normalized, and we define the function $N(p)$ as the normalization of that function $\tilde{\rho}_{max}(\vec{\lambda})$:
\begin{align}
    N(p):=&\int_{S_2} \tilde{\rho}_{max}(\vec{\lambda}) \ \mathrm{d}\vec{\lambda} \, .
\end{align}
To not scare off the reader, we evaluate the integral after we present the protocol.

\begin{protocol}[$0.835\leq p\leq 1$, 1 bit in the worst case, $N(p)$ bits on average] \label{improvedonebitprotocol}
Same as Protocol~\ref{protocolgenfram} with the following 2. Step:\\
Alice and Bob share two random vectors $\vec{\lambda}_1, \vec{\lambda}_2 \in S_2$ according to the distribution:
    \begin{align}
    \rho(\vec{\lambda}_1)=\frac{1}{N(p)} \cdot \tilde{\rho}_{max}(\vec{\lambda}_1)\, ,&&
    \rho(\vec{\lambda}_2)=\frac{1}{\pi}\ \Theta(\vec{\lambda}_2 \cdot \vec{z}) \, .
\end{align}
In addition, Alice and Bob share a random bit $r$ distributed according to $p(r=0)=1-N(p)$ and $p(r=1)=N(p)$. If $r=0$, Alice and Bob do not communicate and both set $\vec{\lambda}:=\vec{\lambda}_2$. If $r=1$, Alice sets $c=1$ with probability:
    \begin{align}
    \begin{split}
        p(c=1|\vec{x},\vec{\lambda}_1)=\tilde{\rho}_{\vec{x}} (\vec{\lambda}_1)/\tilde{\rho}_{max}(\vec{\lambda}_1)
    \end{split}
    \end{align}
    and otherwise she sets $c=2$. She communicates the bit $c$ to Bob. Both set $\vec{\lambda}:=\vec{\lambda}_c$ and reject the other vector.
\end{protocol}
\begin{proof}
    We show that the distribution of the shared vector $\vec{\lambda}$ becomes exactly the required $\rho_{\vec{x}}(\vec{\lambda})$. To see that, consider all the cases where Alice chooses the first vector. This happens only when $r=1$ and when she sets the bit $c$ to 1. This samples the distribution:
\begin{align}
    p(r=1)\cdot p(c=1|\vec{x},\vec{\lambda}_1)\cdot \rho(\vec{\lambda}_1)=\tilde{\rho}_{\vec{x}}(\vec{\lambda}_1) \, .
\end{align}
The total probability that she is choosing the first vector is $\int_{S_2} p(r=1)\cdot p(c=1|\vec{x},\vec{\lambda}_1)\cdot \rho(\vec{\lambda}_1) \, \mathrm{d}\vec{\lambda}_1 =\int_{S_2} \tilde{\rho}_{\vec{x}}(\vec{\lambda}_1) \, \mathrm{d}\vec{\lambda}_1 = 2(1-p)$. In all the remaining cases, with total probability $2p-1$, she is choosing vector $\vec{\lambda}:=\vec{\lambda}_2$. Therefore, the total distribution of the chosen vector $\vec{\lambda}$ becomes the desired distribution
\begin{align}
    \tilde{\rho}_{\vec{x}}(\vec{\lambda})+\frac{(2p-1)}{\pi}\ \Theta(\vec{\lambda}\cdot \vec{z})=\rho_{\vec{x}}(\vec{\lambda}) \, .
\end{align}
Here, the first term $\tilde{\rho}_{\vec{x}}(\vec{\lambda})$ in the sum corresponds to all the instances where Alice chooses $\vec{\lambda}:=\vec{\lambda}_1$ and the second term to all the instances, where she chooses $\vec{\lambda}_2$. In order for the protocol to be well defined, it has to hold that $0\leq p(c=1|\vec{x},\vec{\lambda}_1)\leq 1$ and $0\leq N(p)\leq 1$. The first bound is true since $0\leq \tilde{\rho}_{\vec{x}} (\vec{\lambda}_1)\leq \tilde{\rho}_{max} (\vec{\lambda}_1)$ and the second bound $0\leq N(p)\leq 1$ holds whenever $0.835\leq p\leq 1$.
\end{proof}

We can explicitly solve that integral for $N(p)$ by using spherical coordinates $\vec{\lambda}=(\sin{\theta}\cdot \cos{\phi}, \sin{\theta}\cdot \sin{\phi}, \cos{\theta})$:
\begin{align}
\begin{split}
    N(p)=&\int_{S_2} \tilde{\rho}_{max}(\vec{\lambda}) \ \mathrm{d}\vec{\lambda} 
    = \int^{2\pi}_{0} \int^{\pi}_{0} \tilde{\rho}_{max}(\theta,\phi) \cdot \sin{\theta}\  \mathrm{d}\theta  \ \mathrm{d}\phi \\
    =& \int^{2\pi}_{0} \int^{\pi}_{0} \frac{1}{2\pi}\frac{1-C^2}{\sqrt{1-C^2\sin^2{(\theta)}}+C|\cos{(\theta)}|} \cdot \sin{\theta}\  \mathrm{d}\theta  \ \mathrm{d}\phi \\
    =& \int^{\pi}_{0} \frac{1-C^2}{\sqrt{1-C^2\sin^2{(\theta)}}+C|\cos{(\theta)}|} \cdot \sin{\theta}\  \mathrm{d}\theta  \\
    =& 2 \int^{\pi/2}_{0} \frac{1-C^2}{\sqrt{1-C^2\sin^2{(\theta)}}+C \cos{(\theta)}} \cdot \sin{\theta}\  \mathrm{d}\theta \, .
\end{split}
\end{align}
Substituting $x=C \cdot \cos{\theta}$ leads to:
\begin{align}
\begin{split}
    N(p)=& \frac{2(1-C^2)}{C} \int^{C}_{0} \frac{1}{\sqrt{(1-C^2)+x^2}+x}\  \mathrm{d}x  \\
    =& \frac{2(1-C^2)}{C} \left[\frac{1}{2}\log{\left(\sqrt{(1-C^2)+x^2}+x\right)}-\frac{1-C^2}{4\left(\sqrt{(1-C^2)+x^2}+x\right)^2} \right]^{C}_{0} \\
    =& \frac{1-C^2}{2C} \left(\log{\left(\frac{1+C}{1-C}\right)}+\frac{2C}{1+C} \right)\\
    =& \frac{1-C^2}{2C} \log{\left(\frac{1+C}{1-C}\right)}+(1-C) \\
    =& \frac{2p(1-p)}{2p-1} \log{\left(\frac{p}{1-p}\right)}+2(1-p) \, .
\end{split}
\end{align}
Here we used $C=2p-1$ in the last step.

\section{Maximal local content}\label{applocalcontent}
In Appendix~\ref{improvedprot}, we showed that it is possible to simulate weakly entangled states without communication in some fraction of rounds. One can even go one step further and maximize the fraction of rounds in which no communication is required. This is called the local content of the state $\ket{\Psi_{AB}}$. More formally, a simulation of an entangled qubit pair can be decomposed into a local part $p_L(a,b|\vec{x},\vec{y})$ that can be implemented by Alice and Bob without communication and the remaining non-local content, denoted as $p_{NL}(a,b|\vec{x},\vec{y})$:
\begin{align}
    p_Q(a,b|\vec{x},\vec{y})=p_L \cdot p_L(a,b|\vec{x},\vec{y})+(1-p_L)\cdot  p_{NL}(a,b|\vec{x},\vec{y}) \, . \label{localcontent}
\end{align}
The problem consists in finding the maximal value of $p_L$ for a given state $\ket{\Psi_{AB}}=\sqrt{p}\ket{00}+\sqrt{1-p}\ket{00}$, that we denote here as $p^{max}_L(p)$. 
For a maximally entangled state, Elitzur, Popescu, and Rohrlich showed that the local content is necessarily zero (hence $p^{max}_L(p=1/2)=0$), also known as the EPR2 decomposition \cite{elitzur1992} (see also Barrett et al.~\cite{barrett2006}). For general entangled qubit pairs, it was shown by Scarani~\cite{Scarani2008localpart} that the local content is upper bounded by $2p-1$, hence $p^{max}_L(p)\leq 2p-1$. At the same time, subsequently better lower bounds were found \cite{elitzur1992, Scarani2008localpart, Branciard2010localpart, Portmann2012localpart}. Finally, Portmann et al. \cite{Portmann2012localpart} found an explicit decomposition with a local content of $p_L(p)=2p-1$, hence proving that the upper and lower bound coincide and, therefore, $p^{max}_L(p)=2p-1$. Here, we give an independent proof of the result by Portmann et al. \cite{Portmann2012localpart}. More precisely, we provide a protocol that simulates any pure entangled two-qubit state of the general form $\ket{\Psi_{AB}}=\sqrt{p}\ket{00}+\sqrt{1-p}\ket{00}$ with a local content of $p_L=2p-1$.\\

We remark that often in the literature, the state $\ket{\Psi_{AB}}$ is written as $\ket{\Psi_{AB}}=\cos{\theta}\ket{00}+\sin{\theta}\ket{00}$ where $\cos{\theta}\geq \sin{\theta}$. These two notations are related through the following expressions: $\cos{2\theta}=2p-1$ which follows from $\cos{\theta}=\sqrt{p}$ and $\sin{\theta}=\sqrt{1-p}$ together with $\cos{2\theta}=\cos^2{\theta}-\sin^2{\theta}=p-(1-p)=2p-1$.

\begin{protocol}[$1/2\leq p\leq 1$, maximal local content of $p_L=2p-1$]
Same as Protocol~\ref{protocolgenfram} with the following 2. Step:\\
Alice and Bob share a random vector $\vec{\lambda}_1 \in S_2$ according to the distribution ($\vec{z}:=(0,0,1)^T$):
    \begin{align}
    \rho(\vec{\lambda}_1)=\frac{1}{\pi}\ \Theta(\vec{\lambda}_1 \cdot \vec{z}) \, .
\end{align}
In addition, Alice and Bob share a random bit $r$ distributed according to $p(r=0)=2p-1$ and $p(r=1)=2(1-p)$. If $r=0$, Alice and Bob do not communicate and both set $\vec{\lambda}:=\vec{\lambda}_1$. If $r=1$, Alice samples $\vec{\lambda}\in S_2$ according to the distribution $\tilde{\rho}_{\vec{x}}(\vec{\lambda})$ and communicates that $\vec{\lambda}$ to Bob. (Alice can for example encode the three coordinates of $\vec{\lambda}$ and sends this information to Bob.)
\end{protocol}

\begin{proof}
    Whenever $r=1$, the resulting distribution of $\vec{\lambda}$ is, by construction, $\tilde{\rho}_{\vec{x}}(\vec{\lambda})$. Whenever, $r=0$ (with probability $2p-1$) the distribution of $\vec{\lambda}$ is $\Theta(\vec{\lambda} \cdot \vec{z})/\pi$. Therefore, the total distribution for the chosen vector $\vec{\lambda}$ becomes the desired distribution
\begin{align}
    \tilde{\rho}_{\vec{x}}(\vec{\lambda})+\frac{(2p-1)}{\pi}\ \Theta(\vec{\lambda}\cdot \vec{z})=\rho_{\vec{x}}(\vec{\lambda}) \, .
\end{align}
Hence, by the proof of Protocol~\ref{protocolgenfram}, this exactly reproduces quantum correlations.
\end{proof}

This directly provides a decomposition of the above form (Eq.~\eqref{localcontent}) for $p_L=2p-1$. More precisely, whenever $r=0$ (with total probability $2p-1$), they do not communicate and, hence, implement a local strategy $p_L(a,b|\vec{x},\vec{y})$. On the other hand, whenever $r=0$ they communicate and, therefore, implement a non-local behavior $p_{NL}(a,b|\vec{x},\vec{y})$. That protocol requires an unbounded amount of communication in the rounds where $r=1$ and there are more efficient ways to sample the distribution $\tilde{\rho}_{\vec{x}}(\vec{\lambda})$. However, this is not an issue for determining $p_L^{max}(p)$ since we only maximize the local content, hence, the number of rounds in which Alice and Bob do not have to communicate. At the same time, we are not concerned with the amount of communication in the remaining rounds. One can also ask, what is the maximal local content under the restriction that Alice and Bob only communicate a single bit in the remaining rounds. We found in Appendix~\ref{improvedprot} such a decomposition. However, in general, it seems unlikely that a decomposition that attains the maximal local content of $p^{max}_L(p)=2p-1$ can be achieved if the two parties communicate only a single bit in the remaining rounds.
\end{document}